\documentclass[9pt,twocolumn]{IEEEtran}
\usepackage{amsmath ,amssymb,euscript ,yfonts,psfrag,latexsym,dsfont,graphicx,bbm,color,amstext,wasysym,subfig,flushend,parskip}
\graphicspath{{./},{./figurespdf/}}

\begin{document}
\newtheorem{thm}{Theorem}
\newtheorem{cor}[thm]{Corollary}
\newtheorem{conj}[thm]{Conjecture}
\newtheorem{lemma}[thm]{Lemma}
\newtheorem{prop}[thm]{Proposition}
\newtheorem{problem}[thm]{Problem}
\newtheorem{remark}[thm]{Remark}
\newtheorem{defn}[thm]{Definition}
\newtheorem{ex}[thm]{Example}

\newcommand{\mR}{{\mathbb R}}
\newcommand{\D}{{\mathbb D}}
\newcommand{\E}{{\mathbb E}}
\newcommand{\mcN}{{\mathcal N}}
\newcommand{\mcR}{{\mathcal R}}
\newcommand{\diag}{\operatorname{diag}}
\newcommand{\tr}{\operatorname{trace}}
\newcommand{\ignore}[1]{}

\def\spacingset#1{\def\baselinestretch{#1}\small\normalsize}
\setlength{\parskip}{10pt}
\setlength{\parindent}{20pt}
\spacingset{1}

\definecolor{grey}{rgb}{0.6,0.6,0.6}
\definecolor{lightgray}{rgb}{0.97,.99,0.99}

\title{Optimal steering of a linear stochastic system\\
to a final probability distribution}

\author{Yongxin Chen, Tryphon Georgiou and Michele Pavon
\thanks{Y.\ Chen and T.T.\ Georgiou are with the Department of Electrical and Computer Engineering,
University of Minnesota, Minneapolis, Minnesota MN 55455, USA; {\sf\footnotesize email: \{chen2468,tryphon\}@umn.edu}}
\thanks{M.\ Pavon is with the Dipartimento di Matematica,
Universit\`a di Padova, via Trieste 63, 35121 Padova, Italy; {\sf\footnotesize email: pavon@math.unipd.it}}}
\markboth{\today}{}

\maketitle
\begin{abstract}
We consider the problem to steer a linear dynamical system with full state observation from an initial gaussian distribution in state-space to a final one with minimum energy control. The system is stochastically driven through the control channels; an example for such a system is that of an inertial particle experiencing random ``white noise'' forcing.
We show that a target probability distribution can always be achieved in finite time. The optimal control is given in state-feedback form and is computed explicitely by solving a pair of differential Lyapunov equations that are {\em coupled} through their boundary values.  This result, given its attractive algorithmic nature, appears to have several potential applications such as to active control of nanomechanical systems and molecular cooling. The problem to steer a diffusion process between end-point marginals has a long history (Schr\"odinger bridges) and therefore, the present case of steering a linear stochastic system constitutes a Schr\"odinger bridge for possibly degenerate diffusions. Our results, however, provide the first {\em implementable} form of the optimal control for a general Gauss-Markov process. Illustrative examples of the optimal evolution and control for inertial particles and a stochastic oscillator are provided.
A final result establishes directly the property of Schr\"{o}dinger bridges as the most likely random evolution between given marginals to the present context of linear stochastic systems.

\end{abstract}

\noindent{\bf Keywords:}
Linear stochastic system, Schr\"{o}dinger bridge, stochastic control.

\section{Introduction}
Active control of micro-mechanical systems has witnessed major advances in the past twenty years. At the atomic scale, control of quantum mechanical systems has also enormously increased its scope and effective laboratory implementation. We mention laser-driven molecular reactions, design of pulse sequences in NMR, adaptive quantum measurements and feedback control of optical systems. For a recent survey see \cite{AT} which is addressed to a control engineering audience.  Another important area where feedback control is playing an increasing role is cooling. Advances in nanotechnology permit nowadays to implement feedback
control actions on nanodevices \cite{BBP}. For instance, in surface
topography, the deflection of a cantilever is captured by a
photodetector that records the angle of reflection from a laser
beam focused on the mirrored surface on back side of the cantilever. Position feedback control is used to maintain the probe at a constant force or distance from the object surface. Position can also be differentiated allowing to apply a velocity dependent external force. A velocity dependent feedback control (VFC) has been implemented to reduce thermal noise of a cantilever in atomic force microscopy (AFM) \cite{LMC} and in dynamic force microscopy \cite{THO}. Another important area of application is polymer dynamics \cite{DE}.

Cooling is of interest for non-microscopic systems as well. For macroscopic mechanical resonators, for instance, cooling to ultralow temperatures is indispensable to investigate decoherence. In \cite{Vin,BCD}, a feedback cooling technique  on a ton-scale resonant-bar gravitational wave detector and the corresponding thermodynamics is described.  See \cite{PV} for a recent survey of cooling techniques for both meter-sized detectors and nanomechanical systems. In all cooling applications, the basic model is a stochastic oscillator \cite{N1} which is driven asymptotically to a desired non equilibrium steady state by means of feedback control. These diffusion mediated devices are sometimes called {\em Brownian motors} as work can be extracted from them \cite{R}. The issue of {\em motor efficiency} translates into an optimal control problem \cite{FH}. Indeed, this problem may be viewed as a special case of the theory of Schr\"{o}dinger bridges for diffusion processes \cite{W} where the time interval is infinite \cite{CGP}. The connection between these problems and the so called ``logarithmic transformation" of stochastic control of Fleming, Holland, Mitter  {\em et al.}, see e.g. \cite {F}, has been investigated for some time \cite{daipra,daiprapavon,PW,FH,FHS}.

In spite of this large body of work, the situation is far from satisfactory considering the challenges and opportunities offered by modern technology in controlling micro and macro mechanical systems. One drawback is that the basic theory of the Schr\"odinger bridges has been developed for non degenerate diffusions where the noise acts on all components of the state vector, whereas the stochastic oscillators of interests are degenerate diffusions in phase space. A much more serious problem is that the solution, excepting very special cases \cite{FH,FHS}, is in general not given in a form amenable to computations. Indeed, computing the optimal control requires solving a pair of partial differential equations coupled through their boundary values \cite{W}.

 The purpose of this paper is to partially remedy this situation. We provide what can be regarded as the first {\em computable} and {\em implementable} solution in the important case of a Gauss-Markov process (nonlinear stochastic oscillators are considered in \cite{CGP}). This case had been discussed in the discrete time setting in \cite{beghi}. However, the existence and an implementable form of the optimal control are missing in this paper and, moreover, the noise intensity is assumed to be nonsingular. Another related line of research, in the work by Robert Skelton and his co-workers \cite{skelton}, has been to assign the {\em asymptotic} closed-loop state-covariance with dynamic output feedback. In spite of the fact that control takes place over an infinite time interval, here too, computational aspects and conditions for ``assignability'' of steady state covariance are far from trivial.

In the present work we show that a linear dynamical system can be optimally steered from {\em any} initial Gaussian distribution for the initial state to {\em any} final one, over any finite interval $[0,T]$. The unique minimum-energy state-feedback control is explicitly constructed by solving two linear Lyapunov differential equations. These are {\em nonlinearly coupled} through boundary conditions at the two end points of the interval. However, we show that these boundary values can be expressed in {\em closed form} as (nonlinear) functions of the covariances for the initial and target Gaussian distributions.

The paper is structured as follows. The formulation of the main problem and the variational analysis that shows the form of the optimal control are given in Section \ref{sec:variational}. The existence and the explicit construction of the optimal control is given in Section \ref{sec:III}. Although the state process may be a degenerate diffusions (since, typically, the rank of the input matrix is typically less than the dimension of the state vector), the law of the controlled dynamics is closest in the relative entropy sense to that of the uncontrolled dynamics, just as in the theory of the Schr\"odinger bridges; this is shown in Section \ref{sec:minentropy}. Finally, in Section \ref{sec:examples} we present two illustrative examples. The first one is on {\em inertial} particles experiencing random (white) acceleration, and the second, on active damping of an oscillator driven by Nyquist-Johnson thermal noise.

\section{Problem formulation and variational analysis}\label{sec:variational}
Consider a ``prior" evolution given by the vector Gauss-Markov process $\{x(t) \mid 0\le t\le T\}$ satisfying the $n$-dimensional linear stochastic differential equation
\begin{align}\label{prior}
dx(t)=&A(t)x(t)dt+B(t)dw(t)\\\nonumber &\mbox{with } x(0)=\xi \mbox{ a.s. }
\end{align}
and $\xi$ an $n$-dimensional random vector independent of $\{w(t)\mid 0\le t\le T\}$ with density
\begin{equation}\label{initial}\rho_0(x)=(2\pi)^{-n/2}\det (\Sigma_0)^{-1/2}\exp\left(-\frac{1}{2}x'\Sigma_0^{-1}x\right).
\end{equation}
Throughout,
 $\{w(t) \mid 0\leq t\leq T\}$ is a standard, $m$-dimensional Wiener process and   $A(\cdot)$ and $B(\cdot)$ are continuous matrix functions taking values in $\mR^{n\times n}$ and $\mR^{n\times m}$, respectively.
Consider also the controlled evolution
\begin{align}\label{controlled}dx^u(t)&=A(t)x^u(t)dt+B(t)u(t)+B(t)dw(t),\\\nonumber &\quad x^u(0)=\xi \mbox{ a.s.}
\end{align}
and a ``target'' end-point distribution\begin{equation}\label{final}\rho_T(x)=(2\pi)^{-n/2}\det (\Sigma_T)^{-1/2}\exp\left(-\frac{1}{2}x'\Sigma_T^{-1}x\right),
\end{equation}
which is Gaussian with zero mean with covariance
$\Sigma_T>0$, we
let $\mathcal U$ be the family of {\em adapted}, {\em finite-energy} control functions such that (\ref{controlled}) has a strong solution and $x^u(T)$ is distributed according to \eqref{final}.
More precisely, $u\in\mathcal U$ is such that  $u(t)$ only depends on $t$ and on $\{x^u(s); 0\le s\le t\}$ for each $t\in [0,T]$, satisfies
$$\E\left\{\int_0^Tu(t)' u(t) \,dt\right\}<\infty,
$$
and effects $x^u(T)$ to be distributed according to \eqref{final}.
The family $\mathcal U$ represents  {\em admissible} control inputs which achieve the desired probability density transfer from $\rho_0$ to $\rho_T$. Thence we formulate the following {\em Schr\"{o}dinger Bridge Problem:}

\begin{problem}\label{formalization} Determine whether ${\mathcal U}$ is non-empty and if so, determine
$
u^*:= {\rm argmin}_{u\in \mathcal U} \,J(u)$
where
\[
J(u):=\E\left\{\int_0^Tu(t)' u(t) \,dt\right\}.
\]
\end{problem}

In the next section we will prove that a minimizing control $u^*$ always exists. The stochastic process
$
\{x^*(t)=x^{u^*}(t) \mid 0\le t\le 1\}
$
will be referred to as the {\em Schr\"{o}dinger bridge from $\rho_0$ to $\rho_1$ over the prior $\{x(t)=x^0(t) \mid 0\le t\le 1\}$}.

Notice that in the ``controlled'' equation (\ref{controlled}) the control variables $u(t)$ act through the same input ``channels'' which are subject to noise, i.e., both $u(t)$ and $dw(t)$ affect the state through the same $B(\cdot)$ matrix. The theory that follows can accordingly be relaxed to the case where the control has more ``authority'' (i.e., the range of the corresponding $B$-matrix contains the range of the $B$-matrix for the noise). It will be of interest to study in detail the case where the control authority is less than that of the stochastic noise.

In the remaining of the section we identify a candidate structure for the optimal controls and reduce the problem to an algebraic condition involving two differential Lyapunov equations that are nonlinearly coupled through split boundary conditions.

Let us start by observing that this problem resembles a standard stochastic linear quadratic regulator problem except for the boundary conditions. The usual variational analysis can in fact be carried out, up to a point, namely the expression for the optimal control, in a similar fashion. Of the several ways in which the form of the optimal control can be obtained, we choose a most familiar one, namely the so-called ``completion of squares"\footnote{Although it might be the most familiar to control engineers, the completion of the square argument for stochastic linear quadratic regulator control is not the most elementary. Indeed, a derivation which does not employ It${\rm \bar o}$'s rule was presented in \cite{KP92}.}.  Let $\{\Pi(t) \mid 0\le t\le T\}$ be a differentiable function taking values in the set of symmetric, $n\times n$ matrices. Observe that Problem \ref{formalization} is equivalent to minimizing over $\mathcal U$ the modified index
\begin{align}\label{modind}
\tilde J(u)&=\E\left\{\int_0^Tu(t)' u(t) \,dt\right.\\
&\left.+x(T)'\Pi(T)x(T)-x(0)'\Pi(0)x(0)\right\}.\nonumber
\end{align}
Indeed, as the two end-point marginals densities $\rho_0$ and $\rho_T$ are fixed when $u$ varies in $\mathcal U$, the two boundary terms are constant over $\mathcal U$. We can now rewrite $\tilde J(u)$ as follows
$$\tilde J(u)=\E\left\{\int_0^Tu(t)' u(t) \,dt +\int_0^Td\left(x(t)'\Pi(t)x(t)\right)\right\}.
$$
Assuming that on $[0,T]$ $\Pi(t)$ satisfies the matrix Riccati equation
\begin{equation}\label{R1}
\dot{\Pi}(t)=-A(t)'\Pi(t)-\Pi(t)A(t)+\Pi(t)B(t)B(t)'\Pi(t),
\end{equation}
a standard argument using It$\rm \bar o$'s rule (e.g., see \cite{FR}) shows that
\begin{align*}
\tilde J(u)&=\E\left\{\int_0^T\|u(t)+B(t)'\Pi(t)x(t)\|^2 \,dt\right.\\&\left. + \int_0^T\frac{1}{2}\tr\left(\Pi(t)B(t)B(t)'\right)dt\right\}.
\end{align*}
Observe that the second integral is finite and invariant over $\mathcal U$. Hence, we obtain a candidate for the optimal control in the familiar form
\begin{equation}\label{optcontr}
u^*(t)=-B(t)'\Pi(t)x(t).
\end{equation}
Such a choice of control will be possible
provided we can find a solution $\Pi(t)$ of (\ref{R1}) such that the process
\begin{align}\label{optevolution}
dx^*(t)&=\left(A(t)-B(t)B(t)'\Pi(t)\right)x^*(t)dt+B(t)dw(t),\\
&\quad \mbox{with }x^*(0)=\xi \mbox{ a.s. }\nonumber
\end{align}
leads to $x^*(T)$ with density $\rho_T$. If this is indeed possible, then we have solved Problem \ref{formalization}.
It is important to observe that the optimal control, if it exists, is in a {\em state feedback} form. Consequently, the new optimal evolution is a {\em Gauss-Markov} process just as the prior evolution.

Finding the solution of the Riccati equation which achieves the density transfer is nontrivial. In the classical linear quadratic regulator theory, the terminal cost of the index would provide the boundary value $\Pi(T)$ for (\ref{R1}). However, here there is no boundary value and the two analyses sharply bifurcate. Therefore, we need to resort to something quite different as we have information concerning both initial and final densities, namely $\Sigma_0$ and $\Sigma_T$.

Let $\Sigma(t):=\E\left\{x^*(t)x^*(t)'\right\}$ be the state covariance of the sought optimal evolution. From (\ref{optevolution}) we have that $\Sigma(t)$ satisfies
\begin{align}\nonumber\dot{\Sigma}(t)&= \left(A(t)-B(t)B(t)'\Pi(t)\right)\Sigma(t)\\&\hspace*{-5pt}+\Sigma(t)\left(A(t)-B(t)B(t)'\Pi(t)\right)'+B(t)B(t)'.\label{sigmaevol}
\end{align}
It must also satisfy the two boundary conditions
\begin{equation}\label{BND}
\Sigma(0)=\Sigma_0, \quad \Sigma(T)=\Sigma_T
\end{equation}
and, provided $(A(t),B(t))$ is controllable (see Section \ref{sec:III}), $\Sigma(t)$ is positive definite on $[0,T]$. Thus, we seek a solution pair $(\Pi(t),\Sigma(t))$ of the {\em coupled} system of these two equations \eqref{R1} and \eqref{sigmaevol} with split boundary conditions \eqref{BND}.

Interestingly, if we
define the new matrix-valued function
$${\rm H}(t):=\Sigma(t)^{-1}-\Pi(t),
$$
then a direct calculation using (\ref{sigmaevol}) and (\ref{R1}) shows that ${\rm H}(t)$ satisfies the homogeneous Riccati equation
\begin{equation}\label{R2}
\dot{\rm H}(t)=-A(t)'{\rm H}(t)-{\rm H}(t)A(t)-{\rm H}(t)B(t)B(t)'{\rm H}(t).
\end{equation}
This equation is dual to \eqref{R1} and the system of the two coupled {matrix} equations \eqref{R1} and \eqref{sigmaevol} can be replaced by
\eqref{R1} and \eqref{R2}. The new system is {\em decoupled}, except for the coupling through their boundary conditions
\begin{subequations}\label{bndphi}
\begin{eqnarray}
\Sigma_0^{-1}&=&\Pi(0)+{\rm H}(0)\\
\Sigma_T^{-1}&=&\Pi(T)+{\rm H}(T).
\end{eqnarray}
\end{subequations}
These boundary conditions \eqref{bndphi} are sufficient for meeting the two end-point marginals $\rho_0$ and $\rho_T$ provided of course that $\Pi(t)$ remains finite. {We have therefore established the following result.

\begin{prop}\label{systemRiccati} Suppose $\Pi(t)$ and ${\rm H}(t)$  satisfy equations \eqref{R1}-\eqref{R2} on $[0,T]$ with boundary conditions (\ref{bndphi}). Then the feedback control $u^*$ given in (\ref{optcontr}) is optimal for Problem \ref{formalization} and the optimal evolution of the Schr\"{o}dinger bridge is given by (\ref{optevolution}).
\end{prop}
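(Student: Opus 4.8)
The variational analysis preceding the statement has already done most of the work: the completion-of-squares identity shows that for \emph{every} admissible $u\in\mathcal U$ one has $\tilde J(u)=\E\int_0^T\|u(t)+B(t)'\Pi(t)x(t)\|^2\,dt+C$ with $C=\E\int_0^T\tfrac12\tr(\Pi(t)B(t)B(t)')\,dt$ a constant, and that $\tilde J$ in \eqref{modind} differs from $J$ only by boundary terms that are fixed over $\mathcal U$. Hence $\tilde J(u)\ge C$ with equality exactly when $u=-B'\Pi x$ along its own trajectory, so the feedback law \eqref{optcontr} minimizes $J$ over $\mathcal U$ \emph{provided it is itself admissible}. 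The plan is therefore to verify the single remaining point, namely that the closed-loop process \eqref{optevolution} satisfies $u^*\in\mathcal U$: it is adapted, being a state feedback; it has finite energy; and its terminal law is \eqref{final}.

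Finite energy is immediate once the covariance is controlled: writing $\Sigma(t)=\E\{x^*(t)x^*(t)'\}$, one has $\E\int_0^T\|u^*\|^2\,dt=\int_0^T\tr(B'\Pi\Sigma\Pi' B)\,dt$, which is finite because $B$, $\Pi$ and $\Sigma$ are continuous on the compact interval $[0,T]$. Since \eqref{optevolution} is linear and homogeneous and $\xi$ is zero-mean, the mean of $x^*(t)$ stays identically zero; thus $x^*(T)$ is zero-mean Gaussian and matching \eqref{final} reduces to proving $\Sigma(T)=\Sigma_T$.

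To control $\Sigma(T)$ I would avoid solving the coupled system directly and instead exploit the dual Riccati equation. The covariance $\Sigma$ obeys \eqref{sigmaevol} with $\Sigma(0)=\Sigma_0$, and because $\Sigma_0>0$ the state-transition representation of the solution keeps $\Sigma(t)>0$ on all of $[0,T]$, so $\Sigma(t)^{-1}$ is well defined. Set $\hat{\rm H}(t):=\Sigma(t)^{-1}-\Pi(t)$. The direct calculation already recorded before the statement shows that $\hat{\rm H}$ satisfies the homogeneous Riccati equation \eqref{R2}, and its initial value is pinned down by the first relation in \eqref{bndphi}, namely $\hat{\rm H}(0)=\Sigma_0^{-1}-\Pi(0)={\rm H}(0)$.

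Now both $\hat{\rm H}$ and the given ${\rm H}$ solve the same equation \eqref{R2} with the same value at $t=0$; since the right-hand side of \eqref{R2} is locally Lipschitz and both solutions remain finite on $[0,T]$ (that of ${\rm H}$ by hypothesis, that of $\hat{\rm H}$ because $\Sigma>0$ and $\Pi$ is finite there), uniqueness for the Riccati initial-value problem forces $\hat{\rm H}\equiv{\rm H}$ on $[0,T]$. Evaluating at $t=T$ and using the second relation in \eqref{bndphi} gives $\Sigma(T)^{-1}=\Pi(T)+{\rm H}(T)=\Sigma_T^{-1}$, i.e.\ $\Sigma(T)=\Sigma_T$; hence $x^*(T)$ has density \eqref{final}, $u^*\in\mathcal U$, and by the first paragraph $u^*$ is optimal. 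I expect the only delicate points to be the two that make this uniqueness argument legitimate: maintaining $\Sigma(t)>0$ so that $\hat{\rm H}$ exists, which follows from $\Sigma_0>0$ independently of controllability, and ensuring that neither Riccati solution escapes to infinity on $[0,T]$ so that the coincidence $\hat{\rm H}\equiv{\rm H}$ propagates all the way to the endpoint $T$.
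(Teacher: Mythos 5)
Your proposal is correct and follows essentially the same route as the paper: there the proposition is presented as an immediate consequence of the preceding completion-of-squares identity together with the observation that ${\rm H}=\Sigma^{-1}-\Pi$ satisfies \eqref{R2} and hence the coupled boundary conditions \eqref{bndphi} force $\Sigma(0)=\Sigma_0$ and $\Sigma(T)=\Sigma_T$. Your explicit uniqueness argument for the Riccati initial-value problem (identifying the constructed $\hat{\rm H}$ with the hypothesized ${\rm H}$) and your remark that $\Sigma(t)>0$ follows from $\Sigma_0>0$ alone simply make rigorous two steps the paper leaves implicit.
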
}
{Since (\ref{R1}) and (\ref{R2}) are homogeneous, they always admit the zero solution.} The case $\Pi(t)\equiv 0$ corresponds to the situation where the prior evolution satisfies the boundary marginals conditions and, in that case, ${\rm H}(t)^{-1}$ is simply the prior state covariance.

Thus, Problem \ref{formalization} reduces to the atypical situation of {\em two} Riccati equations (\ref{R1}) and (\ref{R2}) coupled through their boundary values. This might still at first glance appear to be a formidable problem. However, (\ref{R1})-(\ref{R2}) are homogeneous and, {as far as their non singular solutions}, they reduce to {\em linear} differential Lyapunov equations. {The latter, however,} are still coupled through their boundary values in a {\em nonlinear} way. {Indeed,} suppose $\Pi(t)$ exists on the time interval $[0,T]$ and is invertible. Then $Q(t)=\Pi(t)^{-1}$ satisfies the linear equation
\begin{subequations}\label{coupledLyapunov}
\begin{equation}\label{lin1}
\dot{Q}(t)=A(t)Q(t)+Q(t)A(t)'-B(t)B(t)'.
\end{equation}
Likewise, if ${\rm H}(t)$ exists on the time interval $[0,T]$ and is invertible, $P(t)={\rm H}(t)^{-1}$ satisfies the linear equation
\begin{equation}\label{lin2}
\dot{P}(t)=A(t)P(t)+P(t)A(t)'+B(t)B(t)'.
\end{equation}
\end{subequations}
The boundary conditions (\ref{bndphi}) for this new pair $(P(t),Q(t))$ now read
\begin{subequations}\label{bndphi2}
\begin{eqnarray}
\Sigma_0^{-1}&=&P(0)^{-1}+Q(0)^{-1}\\
\Sigma_T^{-1}&=&P(T)^{-1}+Q(T)^{-1}.
\end{eqnarray}
\end{subequations}
Conversely, if $Q(t)$ solves \eqref{lin1} and is nonsingular on $[0,T]$, then $Q(t)^{-1}$ is a solution of (\ref{R1}), and similarly for $P(t)$. {We record the following immediate consequence of Proposition \ref{systemRiccati}.
\begin{cor}\label{corollary} Suppose $P(t)$ and $Q(t)$ are nonsingular on $[0,T]$ and satisfy the equations (\ref{lin1}-\ref{lin2}) with boundary conditions (\ref{bndphi}). Then the feedback control
\begin{equation}\label{optcontr2}
u^*(t)=-B(t)'Q(t)^{-1}x(t).
\end{equation}
is optimal for Problem \ref{formalization}. The evolution of the  optimal Gauss-Markov process is given by
\begin{align}\label{optevolution2}
dx^*(t)&=\left(A(t)-B(t)B(t)'Q(t)^{-1}\right)x^*(t)dt+B(t)dw(t),\\
&\quad \mbox{with }x^*(0)=\xi \mbox{ a.s. }\nonumber
\end{align}
\end{cor}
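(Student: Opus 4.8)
The plan is to obtain Corollary \ref{corollary} as a direct change of variables inside Proposition \ref{systemRiccati}, since all of the analytic work has already been done. First I would set $\Pi(t):=Q(t)^{-1}$ and ${\rm H}(t):=P(t)^{-1}$. Because $P(t)$ and $Q(t)$ are assumed nonsingular on $[0,T]$, these inverses are well defined pointwise, and I would invoke the equivalence already recorded in the text: the inverse of a nonsingular solution of the homogeneous Riccati equation \eqref{R1} solves the linear Lyapunov equation \eqref{lin1}, and symmetrically \eqref{R2} corresponds to \eqref{lin2}. Read in reverse, this says precisely that the pair $(\Pi,{\rm H})$ just constructed satisfies \eqref{R1}-\eqref{R2} on $[0,T]$.

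If a self-contained verification is wanted, it is a one-line computation. Differentiating $\Pi Q=I$ gives $\dot{Q}=-Q\dot{\Pi}Q$; substituting \eqref{R1} and using $\Pi Q=Q\Pi=I$ collapses the right-hand side to $AQ+QA'-BB'$, which is exactly \eqref{lin1}. The identical manipulation applied to ${\rm H}P=I$ together with \eqref{R2} produces \eqref{lin2}. Next I would check the boundary data: under the substitution, $\Pi(0)+{\rm H}(0)=Q(0)^{-1}+P(0)^{-1}$ and $\Pi(T)+{\rm H}(T)=Q(T)^{-1}+P(T)^{-1}$, so the hypothesis on $(P,Q)$ written in $(P,Q)$-coordinates as \eqref{bndphi2} is nothing but \eqref{bndphi} for the pair $(\Pi,{\rm H})$.

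At this stage every hypothesis of Proposition \ref{systemRiccati} is in force for $(\Pi,{\rm H})$, so I may simply invoke it. The optimal feedback \eqref{optcontr}, namely $u^*(t)=-B(t)'\Pi(t)x(t)$, becomes $u^*(t)=-B(t)'Q(t)^{-1}x(t)$ upon substituting $\Pi(t)=Q(t)^{-1}$, which is \eqref{optcontr2}; the same substitution turns the optimal evolution \eqref{optevolution} into \eqref{optevolution2}. This completes the deduction.

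The only point deserving care — and the one place this is more than a formal relabeling — is the finiteness clause underlying Proposition \ref{systemRiccati}, which requires that $\Pi(t)$ (equivalently ${\rm H}(t)$) stay finite on the entire interval so that the end-point marginals are actually met. This is exactly where the nonsingularity assumption on $P$ and $Q$ does its work: as solutions of the linear equations \eqref{lin1}-\eqref{lin2}, $P(t)$ and $Q(t)$ are continuous on the compact set $[0,T]$, and being nonsingular throughout, their inverses are continuous and hence bounded, so $\Pi$ and ${\rm H}$ never blow up. I do not anticipate a genuine obstacle; the corollary is essentially Proposition \ref{systemRiccati} restated in the more convenient Lyapunov coordinates, with the regularity of the inverses guaranteed by the standing hypotheses.
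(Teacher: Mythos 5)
Your proposal is correct and follows essentially the same route as the paper: the text immediately preceding the corollary records that a nonsingular solution $Q(t)$ of \eqref{lin1} yields a solution $Q(t)^{-1}$ of \eqref{R1} (and similarly for $P(t)$ and \eqref{R2}), after which the paper simply declares the corollary an immediate consequence of Proposition \ref{systemRiccati}. Your added remarks --- the explicit reversible computation $\dot{\Pi}=-\Pi\dot{Q}\Pi$ and the observation that nonsingularity on the compact interval $[0,T]$ keeps $\Pi$ and ${\rm H}$ bounded, which is what the finiteness clause of Proposition \ref{systemRiccati} actually needs --- are correct and, if anything, make explicit what the paper leaves implicit.
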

Thus, the system \eqref{coupledLyapunov}-\eqref{bndphi2} appears as the {\em bottleneck} of the Schr\"{o}dinger bridge problem. }In the next section, we prove that in fact \eqref{coupledLyapunov}  always has solution $(P(t),Q(t))$, with both $P(t)$ and $Q(t)$ nonsingular on $[0,T]$, that satisfies \eqref{bndphi2} and that this solution is unique.

\section{Existence and uniqueness of optimal control for the linear Gaussian bridge}\label{sec:III}
We  assume throughout that the system \eqref{prior} (or equivalently the pair $(A(t),B(t))$) is controllable in the sense that the reachability gramian
\[M(t_1,t_0):=\int_{t_0}^{t_1} \Phi(t_1,\tau)B(\tau)B(\tau)^\prime\Phi(t_1,\tau)^\prime d\tau,\]
is nonsingular for all $t_0<t_1$ (with $t_0,t_1\in[0,T]$). As usual,
 $\Phi(t,s)$ denotes the state-transition matrix of \eqref{prior} determined via
\begin{eqnarray*}
\frac{\partial}{\partial t}\Phi(t,s)&=&A(t)\Phi(t,s) \mbox{ and }\Phi(t,t)=I,\end{eqnarray*}
and this is nonsingular for all $t,s\in[0,T]$.
It is worth noting that for $t_1>0$ the reachability grammian
$
M(t_1,0)=P(t_1)>0
$
satisfies the differential Lyapunov equation \eqref{lin2}
with $P(0)=0$.
The controllability grammian
\[
N(t_1,t_0):=\int_{t_0}^{t_1} \Phi(t_0,\tau)B(\tau)B(\tau)^\prime\Phi(t_0,\tau)^\prime d\tau,
\]
is necessarily also nonsingular for all $t_0<t_1$ ($t_0,t_1\in[0,T]$) and if, we similarly set
$Q(t_0)=N(T,t_0)$, then $Q(t)$ satisfies \eqref{lin1} with $Q(T)=0$.

However, as suggested in the previous section, we need to consider solutions $P(\cdot)$, $Q(\cdot)$ of these two differential Lyapunov equations \eqref{coupledLyapunov}
that satisfy boundary conditions that are coupled through \eqref{bndphi2}.
In general, $P(t)$ and $Q(t)$ do not need to be sign definite, but in order for
\begin{equation}\label{eq:Sinv}
\Sigma(t)^{-1}=P(t)^{-1}+Q(t)^{-1}.
\end{equation}
to qualify as a covariance of the controlled process \eqref{controlled} $P(t)$ and $Q(t)$ need to be invertible. This condition is also sufficient and $\Sigma(t)$ satisfies the corresponding differential Lyapunov equation for the covariance of the controlled process \eqref{optevolution2}
\begin{equation}\label{eq:lyapAQ}
\dot \Sigma(t) =
A_Q(t)\Sigma(t)+
\Sigma(t)A_Q(t)^\prime + B(t)B(t)^\prime
\end{equation}
with
\begin{equation}\label{eq:AQ}
A_Q(t):=(A(t)-B(t)B(t)^\prime Q(t)^{-1}).
\end{equation}

Next, we present our main technical result on the existence and uniqueness of an admissible pair $(P_-(t),Q_-(t))$ of solutions to (\ref{coupledLyapunov})-(\ref{bndphi2})  that are invertible on $[0,T]$. Interstingly, there is always a second solution  $(P_+(t),Q_+(t))$ to the nonlinear  problem (\ref{coupledLyapunov})-(\ref{bndphi2})  which is not admissible as it fails to be invertible on $[0,T]$.

\begin{prop}\label{prop:schpair} Consider $\Sigma_0,\Sigma_T>0$ and a controllable pair $(A(t),B(t))$ as before. The system of the two differential Lyapunov equations \eqref{coupledLyapunov} has two sets of solutions $(P_\pm(\cdot),Q_\pm(\cdot))$ over $[0,T]$ that simultaneously satisfy the coupling boundary conditions (\ref{bndphi2})
These two solutions are specified by
\begin{eqnarray*}
Q_\pm(0)
&=&N(T,0)^{1/2}S_0^{1/2}\left(S_0+\frac12 I\pm \left(S_0^{1/2}
S_TS_0^{1/2}\right.\right.\\
&&\left.\left.+\frac{1}{4}I\right)^{1/2}\right)^{-1}S_0^{1/2}N(T,0)^{1/2},\\
P_\pm(0)&=&\left(\Sigma_0^{-1}-Q_\pm(0)^{-1}\right)^{-1}
\end{eqnarray*}
and the two differential equations \eqref{coupledLyapunov}, where
\begin{align*}
S_0&=N(T,0)^{-1/2}\Sigma_0N(T,0)^{-1/2},\\
S_T&=N(T,0)^{-1/2}\Phi(0,T)\Sigma_T\Phi(0,T)N(T,0)^{-1/2}.
\end{align*}
The two pairs $(P_\pm(t),Q_\pm(t))$ with subscript $-$ and $+$, respectively, are distinguished by the following:
\begin{itemize}
\item[i)]
$Q_-(t)$ and $P_-(t)$ are both nonsingular on $[0,T]$, whereas
\item[ii)] $Q_+(t)$ and $P_+(t)$ become singular for some $t\in[0,T]$, possibly not for the same value of $t$.
\end{itemize}
\end{prop}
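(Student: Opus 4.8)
The plan is to reduce the two--point boundary value problem for \eqref{coupledLyapunov}--\eqref{bndphi2} to a purely algebraic system at $t=0$, solve that system in closed form by a change of variables that collapses it to a single matrix quadratic, and then settle the invertibility dichotomy (i)--(ii) by a monotonicity argument along the reachability gramian. Throughout write $\Phi=\Phi(T,0)$ and $N=N(T,0)$.

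First I would integrate the two \emph{linear} Lyapunov equations \eqref{coupledLyapunov} explicitly by variation of parameters, obtaining $Q(T)=\Phi\,[Q(0)-N]\,\Phi'$ and $P(T)=\Phi\,[P(0)+N]\,\Phi'$. Substituting these into the terminal coupling converts \eqref{bndphi2} into two algebraic matrix equations in the single pair $(P(0),Q(0))$: the initial relation $P(0)^{-1}+Q(0)^{-1}=\Sigma_0^{-1}$ together with $[P(0)+N]^{-1}+[Q(0)-N]^{-1}=\tilde\Sigma_T^{-1}$, where $\tilde\Sigma_T=\Phi(0,T)\Sigma_T\Phi(0,T)'$. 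Normalizing by $N$ via $q=N^{-1/2}Q(0)N^{-1/2}$ and $p=N^{-1/2}P(0)N^{-1/2}$ rewrites these as $p^{-1}+q^{-1}=S_0^{-1}$ and $(p+I)^{-1}+(q-I)^{-1}=S_T^{-1}$, with $S_0,S_T$ as in the statement (so $S_T=N^{-1/2}\tilde\Sigma_T N^{-1/2}$).

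The crux is to solve this normalized system. Rewriting the second equation as $(I-q^{-1})^{-1}-(I+p^{-1})^{-1}=S_T^{-1}$ and using the first to substitute $p^{-1}=S_0^{-1}-q^{-1}$, I would pass to the variables $c:=S_0^{1/2}q^{-1}S_0^{1/2}$ and $d:=c-S_0$. A direct computation shows that $S_0$ cancels throughout and the second equation collapses to the homogeneous matrix quadratic $d^2-d=R$, with $R:=S_0^{1/2}S_TS_0^{1/2}$. Completing the square gives $(d-\tfrac12 I)^2=R+\tfrac14 I$, whose two \emph{definite} symmetric roots $d_\pm=\tfrac12 I\pm(R+\tfrac14 I)^{1/2}$ yield, on unwinding the substitutions, exactly the stated formulas for $Q_\pm(0)$ and $P_\pm(0)$; the solutions on all of $[0,T]$ then follow by integrating \eqref{coupledLyapunov} from these initial data. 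I expect this algebraic collapse to be the main obstacle: the right change of variables is precisely what makes the $S_0$-dependence disappear and leaves a clean quadratic, and it is not obvious a priori.

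Finally, for (i)--(ii) I would exploit that $N(t,0)$ increases monotonically in the Loewner order from $0$ to $N$, so that $\nu(t):=N^{-1/2}N(t,0)N^{-1/2}$ traces a monotone path from $0$ to $I$. Since $Q(t)$ is congruent to $q-\nu(t)$ and $P(t)$ to $p+\nu(t)$, and ordered eigenvalues are monotone under the Loewner order, each eigenvalue of $q-\nu(t)$ decreases continuously from $\lambda_k(q)$ to $\lambda_k(q)-1$, the endpoint value being exact because $q$ and $q-I$ share eigenvectors. Hence $Q(t)$ stays nonsingular on $[0,T]$ iff $q$ has no eigenvalue in $[0,1]$, which a short congruence computation shows is equivalent to $d\prec0$; likewise $P(t)$ stays nonsingular iff $d\prec I$. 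Because $R\succ0$ forces $(R+\tfrac14 I)^{1/2}\succ\tfrac12 I$, we get $d_-\prec0\prec I$ and $d_+\succ I\succ0$, so $d_-$ meets both criteria while $d_+$ fails both, establishing (i) and (ii). The same computation shows $d_-$ is the \emph{only} symmetric root of the quadratic with $d\prec0$, which in turn gives uniqueness of the admissible pair $(P_-,Q_-)$.
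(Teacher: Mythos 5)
Your proposal is correct, and for the construction of $Q_\pm(0)$ it follows essentially the same route as the paper: the paper removes the drift and normalizes the gramian by the time-varying state change $\xi(t)=N(T,0)^{-1/2}\Phi(0,t)x(t)$, after which $P(T)=P(0)+I$, $Q(T)=Q(0)-I$ and the boundary coupling collapses to the quadratic $(I-Q(0)^{-1})S_0(I-Q(0)^{-1})+(I-Q(0)^{-1})=S_T$. Your purely algebraic normalization $q=N^{-1/2}Q(0)N^{-1/2}$, $p=N^{-1/2}P(0)N^{-1/2}$ followed by the substitution $d=S_0^{1/2}q^{-1}S_0^{1/2}-S_0$ produces the congruent quadratic $d^2-d=R$, so the ``algebraic collapse'' you flag as the main obstacle is exactly the paper's step, packaged as a substitution rather than a coordinate change (I checked the identity $(q-I)^{-1}+(p+I)^{-1}=(I-q^{-1})^{-1}-(I+p^{-1})^{-1}$ and the unwinding to the stated formula; both are right). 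Where you genuinely diverge is in (i)--(ii): the paper proves (i) by a matrix-inversion-lemma expansion of $(Q_-(0)-M(t,0))^{-1}$ together with a bound showing the inner factor is maximal, yet still negative definite, at $t=T$, and proves (ii) from the sign flip $0\prec Q_+(0)\prec I$ versus $Q_+(T)\prec 0$; your Weyl/Loewner monotonicity of the ordered eigenvalues of $q-\nu(t)$ along the gramian instead yields a clean \emph{iff} criterion ($Q$ stays nonsingular on $[0,T]$ iff $q$ has no eigenvalue in $[0,1]$ iff $d\prec 0$, and $P$ iff $d\prec I$), which settles (i) and (ii) simultaneously and, as a bonus, proves that $d_-=\tfrac12 I-(R+\tfrac14 I)^{1/2}$ is the \emph{only} symmetric square-root branch that is admissible --- the uniqueness needed later in Theorem~\ref{prop:bridge}, which the paper only asserts via ``two Hermitian solutions'' without excluding the mixed-sign roots. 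One caveat you share with the paper: when the uncontrolled evolution already matches the terminal marginal, $S_0+d_-$ is singular, so $Q_-(0)$ exists only as the inverse of $\Pi(0)=0$ and the stated formula (in either derivation) must be read in terms of $Q_-(0)^{-1}$.
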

\begin{proof}
Apply the time-varying change of coordinates
\[
\xi(t)=N(T,0)^{-1/2}\Phi(0,t)x(t).
\]
Then, in this new coordinates the dynamical system \eqref{prior} becomes
\[
d\xi(t)=\underbrace{N(T,0)^{-1/2}\Phi(0,t)B(t)}_{B_{\rm new}(t)} dw(t).
\]
We will prove the statement in this new set of coordinates for the state and revert back to the one at the end. Accordingly,
\[
\dot P_{\rm new}(t)=B_{\rm new}(t)B_{\rm new}(t)^\prime,
\]
\[
\dot Q_{\rm new}(t)=-B_{\rm new}(t)B_{\rm new}(t)^\prime,
\]
along with $M_{\rm new}(T,0)=N_{\rm new}(T,0)=I$ and
\begin{subequations}
\begin{equation}\label{sigma0new}
\Sigma_{0,\rm new}=N(T,0)^{-1/2}\Sigma_0N(T,0)^{-1/2},
\end{equation}
while
\begin{equation}\label{sigmatnew}
\Sigma_{T,\rm new}=N(T,0)^{-1/2}\Phi(0,T)\Sigma_T\Phi(0,T)^\prime N(T,0)^{-1/2}.
\end{equation}
\end{subequations}
{
The relation between $Q_{\rm new}(t)$ and $Q(t)$ is given by
\[
    Q_{\rm new}(t)=N(T,0)^{-1/2}\Phi(0,t)Q(t)\Phi(0,t)'N(T,0)^{-1/2}.
\]
This can be seen by taking the derivative of both sides
\begin{eqnarray*}
    \dot Q_{\rm new}(t)&=& -N(T,0)^{-1/2}\Phi(0,t)A(t)Q(t)\Phi(0,t)'N(T,0)^{-1/2}\\
                        &&-N(T,0)^{-1/2}\Phi(0,t)Q(t)A(t)'\Phi(0,t)'N(T,0)^{-1/2}\\
                        && +N(T,0)^{-1/2}\Phi(0,t)\dot{Q}(t)\Phi(0,t)'N(T,0)^{-1/2}\\
                        &=& -N(T,0)^{-1/2}\Phi(0,t)B(t)B(t)'\Phi(0,t)'N(T,0)^{-1/2}\\
                        &=& -B_{\rm new}(t)B_{\rm new}(t)^\prime.
\end{eqnarray*}
}
In the next paragraph, for notational convenience, we drop the subscript ``new'' and prove the statement assuming that $A(t)=0$ as well as
 $N(T,0)=I$. We will return to the notation that distinguishes the two sets of coordinates with the subscript ``new'' and relate back to the original ones at the end of the proof.

Since $A(t)=0$, then $\Phi(t,x)=I$ for all $s,t\in[0,T]$. Further, $M(T,0)=N(T,0)=I$. Thus,
\begin{eqnarray*}
P(T)&=&P(0)+I\\
Q(T)&=&Q(0)-I.
\end{eqnarray*}
Substituting in (\ref{bndphi2}), we
obtain that
\begin{eqnarray*}
Q(0)^{-1}+P(0)^{-1}&=&\Sigma_0^{-1}\\
(Q(0)-I)^{-1}+(P(0)+I)^{-1}&=&\Sigma_T^{-1}.
\end{eqnarray*}
Solving the first for $P(0)$ as a function of $Q(0)$ and substituting in the second, we have
\begin{eqnarray*}
\Sigma_T^{-1}&=&((\Sigma_0^{-1}-Q(0)^{-1})^{-1}+I)^{-1}+(Q(0)-I)^{-1}\\
&=&
((\Sigma_0^{-1}-Q(0)^{-1})^{-1}+I)^{-1}\\
&&\hspace*{10pt}\times(
Q(0)+(\Sigma_0^{-1}-Q(0)^{-1})^{-1})
)(Q(0)-I)^{-1}\\
&=&
((\Sigma_0^{-1}-Q(0)^{-1})^{-1}+I)^{-1}\\
&&\hspace*{10pt}\times
(\Sigma_0^{-1}-Q(0)^{-1})^{-1}\Sigma_0^{-1}Q(0)(Q(0)-I)^{-1}\\
&=&(\Sigma_0^{-1}+I-Q(0)^{-1})^{-1}\Sigma_0^{-1}(I-Q(0)^{-1})^{-1}
\end{eqnarray*}
which after inversion leads to
\[
(I-Q(0)^{-1})\Sigma_0(I-Q(0)^{-1})+(I-Q(0)^{-1})=\Sigma_T.
\]
This is a quadratic expression and has two Hermitian solutions
\begin{equation}\label{eq:finite}
I-Q(0)^{-1}=\Sigma_0^{-1/2}
\left(-\frac12 I \mp \left(\Sigma_0^{1/2}\Sigma_T\Sigma_0^{1/2}+\frac{1}{4}I\right)^{1/2}\right)
\Sigma_0^{-1/2}.
\end{equation}

This gives that
\[
Q(0)=\Sigma_0^{1/2}\left(\Sigma_0+\frac12 I\pm \left(\Sigma_0^{1/2}
\Sigma_T\Sigma_0^{1/2}+\frac{1}{4}I\right)^{1/2}\right)^{-1}\Sigma_0^{1/2}.
\]
To see that i) holds evaluate (in these simplified coordinates where there is no drift and $M(T,0)=I$)
\begin{eqnarray*}
Q_-(t)^{-1}&=&(Q_-(0)-M(t,0))^{-1}\\
&=&-M(t,0)^{-1}-M(t,0)^{-1}\\
&&\hspace*{10pt}\times(Q_-(0)^{-1}-M(t,0)^{-1})^{-1}M(t,0)^{-1}\\
&=&-M(t,0)^{-1}-M(t,0)^{-1}
\Sigma_0^{1/2}\left(\Sigma_0+\frac12 I\right.\\
&&\hspace*{-70pt}\left.- \left(\Sigma_0^{1/2}
\Sigma_T\Sigma_0^{1/2}+\frac{1}{4}I\right)^{1/2}-\Sigma_0^{1/2}M(t,0)^{-1}
\Sigma_0^{1/2}\right)^{-1}\Sigma_0^{1/2}
M(t,0)^{-1}
\end{eqnarray*}
for $t>0$.
For $t\in(0,1]$, the expression in parenthesis
\[
\Sigma_0+\frac12 I- \left(\Sigma_0^{1/2}
\Sigma_T\Sigma_0^{1/2}+\frac{1}{4}I\right)^{1/2}-\Sigma_0^{1/2}M(t,0)^{-1}
\Sigma_0^{1/2}
\]
is clearly maximal when $t=T$. However, for $t=T$ when $M(T,0)=I$, this expression is seen to be
\[
\frac12 I- \left(\Sigma_0^{1/2}
\Sigma_T\Sigma_0^{1/2}+\frac{1}{4}I\right)^{1/2} <0.
\]
Therefore, the expression in parenthesis is never singular and we deduce that $Q_-(t)^{-1}$ remains bounded for all $t\in(0,T]$, i.e., $Q_-(t)$ remains non-singular. For $t=0$, $Q(0)^{-1}$ is seen to be finite from \eqref{eq:finite}.
The argument for $P_-(t)$ is similar. Regarding ii), it suffices to notice that
$0<Q_+(0)<I$ while $Q_+(T)=Q_+(0)-I<0$. The statement ii) follows by continuity of $Q_+(t)$, and similarly for $P_+(t)$.

We now revert back to the set of coordinates where the drift is not necessarily zero and where $N(T,0)$ may not be the identity.
We see that
    \begin{eqnarray*}
        Q_\pm(0) &=& N(T,0)^{1/2}(Q_\pm(0))_{\rm new}N(T,0)^{1/2}\\
        &=&N(T,0)^{1/2}\Sigma_{0,\rm new}^{1/2}\left(\Sigma_{0,\rm new}+\frac12 I\right.\\
        &&\left.\pm \left(\Sigma_{0,\rm new}^{1/2}
            \Sigma_{T,\rm new}\Sigma_{0,\rm new}^{1/2}
            +\frac{1}{4}I\right)^{1/2}\right)^{-1}\\
            &&\Sigma_{0,\rm new}^{1/2}N(T,0)^{1/2}
    \end{eqnarray*}
where $\Sigma_{0,\rm new},\Sigma_{T,\rm new}$ as in (\ref{sigma0new}-\ref{sigmatnew}), which for compactness of notation in the statement of the proposition we rename $S_0$ and $S_T$, respectively.
\end{proof}

\begin{remark}
We have numerically observed that the iteration
\[
\begin{array}{cl}
P(0)&\\
\downarrow&\\
P(T)&=\Phi(T,0)P(0)\Phi(T,0)^\prime + M(T,0)\\
\downarrow&\\
Q(T)&=(\Sigma_T^{-1}-P(T)^{-1})^{-1}\\
\downarrow&\\
Q(0)&=\Phi(0,T)(Q(T)+M(T,0))\Phi(0,T)^\prime\\
\downarrow&\\
P(0)&=(\Sigma_0^{-1}-Q(0)^{-1})^{-1}
\end{array}
\]
using (\ref{bndphi2}) , converges to $Q_-(0)$, $P_-(0)$, $Q_-(T)$, $P_-(T)$, starting from a generic choice for $Q(0)$. The choice with a ``$-$'' is the one that generates the Schr\"odinger bridge as explained below. It is interesting to compare this property with similar properties of iterations that lead to solutions of Schr\"odinger systems in \cite{PT2,GP}.
\end{remark}

\begin{remark}
Besides the expression in the proposition, another equivalent formula for $Q_\pm(0)$ is
    \begin{eqnarray*}
        Q_\pm(0)&=&\Sigma_0^{1/2}(\frac{1}{2}I+
\Sigma_0^{1/2}\Phi(T,0)^\prime M(T,0)^{-1}\Phi(T,0)\Sigma_0^{1/2}\pm\\
&&(\frac{1}{4}I+\Sigma_0^{1/2}\Phi(T,0)^\prime M(T,0)^{-1}\Sigma_TM(T,0)^{-1}\\
&&\Phi(T,0)\Sigma_0^{1/2}
)^{1/2})^{-1}\Sigma_0^{1/2}
    \end{eqnarray*}
\end{remark}

\begin{remark}Interestingly, the solution $\Pi_+(t)=Q_+(t)^{-1}$ of the Riccati equation (\ref{R1}) corresponding to the choice ``$+$'' in $Q_\pm$ has a {\em finite escape time}.
\end{remark}

We are now in a position to {state the full} solution to the Schr\"odinger Bridge Problem \ref{formalization}. 

\begin{thm}\label{prop:bridge}
Assuming that the pair $(A(t),B(t))$ is controllable and that $\Sigma_0,\Sigma_T>0$, Problem \ref{formalization} has a unique optimal solution
\begin{equation}\label{eq:optimalcontrol}
u^8(t)=-B(t)'Q_-(t)^{-1}x(t)
\end{equation}
where $Q_-(\cdot)$ (together with a corresponding matrix function $P_-(\cdot)$) solves to the pair of coupled Lyapunov differential equations in Proposition \ref{prop:schpair}.
\end{thm}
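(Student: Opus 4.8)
The plan is to combine the existence statement of Proposition~\ref{prop:schpair} with the optimality characterization of Corollary~\ref{corollary} and to add the one ingredient those results leave open, namely \emph{uniqueness} of the minimizer. First I would fix, via Proposition~\ref{prop:schpair}, the ``$-$'' pair $(P_-(\cdot),Q_-(\cdot))$: it solves the two Lyapunov equations \eqref{coupledLyapunov}, meets the coupling conditions \eqref{bndphi2}, and is nonsingular on all of $[0,T]$. Setting $\Pi_-(t):=Q_-(t)^{-1}$, this is a finite solution of the Riccati equation \eqref{R1} on the whole interval, so the feedback $u^*(t)=-B(t)'Q_-(t)^{-1}x(t)$ of \eqref{eq:optimalcontrol} is well defined; Corollary~\ref{corollary} already certifies it as optimal, so the substantive remaining tasks are to make admissibility $u^*\in\mathcal U$ explicit and to prove the minimizer is unique. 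The ``$+$'' pair is discarded at the outset: by Proposition~\ref{prop:schpair}(ii) $Q_+$ becomes singular in $[0,T]$, so $\Pi_+=Q_+^{-1}$ has a finite escape time and its feedback is not admissible.

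To see that $u^*\in\mathcal U$, note it is a state feedback, hence adapted, and that the closed loop \eqref{optevolution2} is a zero-mean Gauss--Markov process whose covariance $\Sigma(t)$ obeys \eqref{eq:lyapAQ} with $A_{Q_-}(t)=A(t)-B(t)B(t)'Q_-(t)^{-1}$ as in \eqref{eq:AQ}. By the direct computation behind \eqref{eq:Sinv}, $(P_-(t)^{-1}+Q_-(t)^{-1})^{-1}$ also solves \eqref{eq:lyapAQ}, so uniqueness for this ODE identifies it with $\Sigma(t)$; the boundary relations \eqref{bndphi2} then give $\Sigma(0)=\Sigma_0$ and $\Sigma(T)=\Sigma_T$, so $x^*(T)$ has the target density $\rho_T$. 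Positivity of $\Sigma(t)$ on $[0,T]$ is read off the variation-of-constants representation $\Sigma(t)=\Phi_{Q_-}(t,0)\Sigma_0\Phi_{Q_-}(t,0)'+\int_0^t\Phi_{Q_-}(t,s)B(s)B(s)'\Phi_{Q_-}(t,s)'\,ds$ together with $\Sigma_0>0$, and finite energy from $\E\{\int_0^Tu^*(t)'u^*(t)\,dt\}=\int_0^T\tr\big(B(t)'Q_-(t)^{-1}\Sigma(t)Q_-(t)^{-1}B(t)\big)\,dt<\infty$, the integrand being continuous on the compact interval.

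Optimality and uniqueness then both fall out of the completion-of-squares identity of Section~\ref{sec:variational} applied with $\Pi=\Pi_-$. For every $u\in\mathcal U$ it gives
\[
J(u)=\E\left\{\int_0^T\|u(t)+B(t)'\Pi_-(t)x(t)\|^2\,dt\right\}+c,
\]
where $c$ collects only $u$-independent quantities — the running $\tr(\Pi_-BB')$ term of Section~\ref{sec:variational} and the boundary expectations $\tr(\Pi_-(0)\Sigma_0)$ and $\tr(\Pi_-(T)\Sigma_T)$, all pinned down by $\rho_0$, $\rho_T$ and $\Pi_-$. Nonnegativity of the integrand forces $J(u)\ge c$, with equality exactly when $u(t)+B(t)'\Pi_-(t)x(t)=0$ for $dt\times d\mathbb P$–almost every $(t,\omega)$, i.e. precisely when $u=u^*$. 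Since $u^*\in\mathcal U$ attains $c$, it is the unique optimal control, which is the assertion of the theorem.

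The step demanding the most care is the completion-of-squares identity for a \emph{general}, possibly non-Markovian, admissible control. Its derivation applies It\^o's rule to $x(t)'\Pi_-(t)x(t)$ and discards $\int_0^T 2x(t)'\Pi_-(t)B(t)\,dw(t)$ as a zero-mean martingale; justifying this requires $\E\{\int_0^T\|B(t)'\Pi_-(t)x(t)\|^2\,dt\}<\infty$, which follows from the finite-energy constraint, a Gronwall estimate propagating $\sup_t\E\|x^u(t)\|^2<\infty$ under the $L^2$ forcing $Bu$, and the boundedness of $B'\Pi_-$ on $[0,T]$. The remaining items — confirming that $(P_-^{-1}+Q_-^{-1})^{-1}$ solves \eqref{eq:lyapAQ} and bounding the energy integral — are routine.
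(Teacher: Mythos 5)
Your proposal is correct and follows the same route as the paper: invoke Proposition~\ref{prop:schpair} for the nonsingular pair $(P_-,Q_-)$ and Corollary~\ref{corollary} (i.e.\ the completion-of-squares identity of Section~\ref{sec:variational}) for optimality. The paper's own proof is essentially a two-line citation of those results, so the main difference is that you unroll and complete the argument: you verify admissibility $u^*\in\mathcal U$ (target covariance via \eqref{eq:lyapAQ}, finite energy, adaptedness), you justify dropping the stochastic integral in the completion of squares via a second-moment bound on $x^u$, and--most substantively--you ground \emph{uniqueness of the minimizer} in the strict positivity of $\E\{\int_0^T\|u+B'\Pi_-x^u\|^2dt\}$ for $u\neq u^*$, which is the argument that actually rules out arbitrary (possibly non-feedback) competitors in $\mathcal U$, whereas the paper's phrasing attributes uniqueness to the uniqueness of the nonsingular solution pair of \eqref{coupledLyapunov}--\eqref{bndphi2}, a statement about the Lyapunov system rather than about the control problem. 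One small step you leave implicit is that equality $u(t)=-B(t)'\Pi_-(t)x^u(t)$ a.e.\ forces $x^u$ to solve the closed-loop SDE, whose strong solution is unique, hence $u=u^*$; that is worth a sentence but is routine. Net effect: same approach, with the gaps in the paper's terse proof filled in correctly.
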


\begin{proof}
{Since Proposition \ref{prop:schpair} has established existence and uniqueness of nonsingular solutions  $(P_-(\cdot),Q_-(\cdot))$ to the system \eqref{coupledLyapunov}, the result now follows from Corollary \ref{corollary}.}
\end{proof}

Thus, the controlled process \eqref{optevolution2} with $\Pi(t)=Q_-(t)^{-1}$,
\begin{equation}\label{eq:bridge}
dx^*=(A(t)-B(t)B(t)^\prime Q_-(t)^{-1}) x^*(t)dt+Bdw(t)
\end{equation}
steers the beginning density $\rho_0$ to the final one, $\rho_T$, with the least cost. Alternatively,
it forms a {least-effort} bridge between the two given marginals. It turns out that this controlled stochastic differential equation specifies the random evolution which is closest to the prior in the sense of relative entropy among those with the two given marginal distributions. This will be explained next. 

\section{Minimum relative entropy interpretation of optimal control}\label{sec:minentropy}

As noted earlier, there is a close relationship between the theory of large deviations, maximum entropy problems for random evolutions and stochastic optimal control \cite{Follmer,wakolbinger,daipra,daiprapavon,PW}.
In particular, classical Schr\"odinger bridges can be interpreted as both, a solution to a stochastic optimal control problem as well as inducing a probability law on path space that is consistent with given marginals and that is the closest to the prior in the sense of relative entropy. In other words, in effect, they answer the question of what the most likely path distribution is after ``conditioning'' the stochastic evolution on the two end-point marginals. Below we show that
the same property holds for the present case
of general stochastic linear system, i.e., of possibly degenerate linear diffusions.

\newcommand{\cP}{\mathcal P}
\newcommand{\cX}{\mathcal X}
\newcommand{\cPt}{\tilde{\mathcal P}}
For the purposes of this section we denote by $\cX=C([0,T];\mR^n)$
the space of continuous, $n$-dimensional sample paths of a linear diffusion as in \eqref{prior} and by
$\cP(\cdot)$ the induced probability measure on $\cX$.
One can describe $\cP(\cdot)$ as a mixture of measures pinned at the two ends of the interval $[0,T]$, that is,
\[
\cP(\cdot)=\int\cP(\cdot \mid x(0)=x_0,\,x(T)=x_T)\cP_{0,T}(dx_0dx_T)
\]
where $\cP(\cdot \mid x(0)=x_0,\,x(T)=x_T)$ is the conditional probability and $\cP_{0,T}(\cdot)$ is the joint probability of $(x(0),x(T))$.
The two end-point joint measure $\cP_{0,T}(\cdot)$, which is gaussian, has a (zero-mean) probability density function $g_{S_{0,T}}(x_0,x_T)$ with covariance
\begin{equation}\label{eq:S}
S_{0,T}=\left[\begin{matrix}S_0&S_0\Phi(T,0)'\\
\Phi(T,0)S_0&S_T\end{matrix}\right]
\end{equation}
where
\begin{align*}
S_0&=\E\{x_0x_0'\}\\
S_t&=\Phi(t,0)S_0\Phi(t,0)'+\int_0^t\Phi(t,\tau)B(\tau)B(\tau)'\Phi(t,\tau)'d\tau.
\end{align*}
In view of Sanov's theorem, see \cite[Section 3]{W}, Schr\"odinger's question reduces to identifying a probability law $\cPt(\cdot)$ on $\cX$ that minimizes the relative entropy
\newcommand{\cS}{{\mathcal S}}
\[
\cS(\cPt,\cP):=\int_\cX \log\left(\frac{d\cPt}{d\cP}\right)d\cPt
\]
among those that have the prescribed marginals. This is a very abstract problem.
However, if we  disintegrate $\cPt$
\[
\cPt(\cdot)=\int\cPt(\cdot \mid x(0)=x_0,\,x(T)=x_T)\cPt_{0,T}(dx_0dx_T),
\]
then the relative entropy can be readily written as the sum of two nonnegative terms,
the relative entropy between the two end-point joint measures
\[
\int \log\left(\frac{d\cPt_{0,T}}{d\cP_{0,T}}\right)\cPt_{0,T}
\]
plus
\[
\int \log\left(\frac{d\cPt(\cdot \mid x(0)=x_0,\,x(T)=x_T)}{d\cP(\cdot \mid x(0)=x_0,\,x(T)=x_T)}\right)\cPt.
\]
The second term becomes zero (and therefore minimal) when the conditional probability $\cPt(\cdot \mid x(0)=x_0,\,x(T)=x_T)$ is taken to be the same as
 $\cP(\cdot \mid x(0)=x_0,\,x(T)=x_T)$. Thus, the solution is in the same {\em reciprocal class} \cite{LK} as the prior evolution and, as already observed by Schr\"{o}dinger, the problem reduces to the finite-dimensional problem of minimizing relative entropy of the joint initial-final distribution among those that have the prescribed marginals.

It turns out that the probability law induced by \eqref{eq:bridge} is closest, in the relative entropy sense to the law of \eqref{prior}, that agrees with the two end-point marginal distributions at $t=0$ and $t=T$. Below we show this by verifying directly that  the densities between the two are identical when conditioned at the two end points, i.e., they share the same bridges, and that the end-point joint marginal for \eqref{eq:bridge} is indeed closest to the corresponding joint marginal for \eqref{prior}.

In order to show that two linear systems share the same bridges, we need the following lemma which is based on \cite{CG2014}.
\begin{lemma}\label{lemma:pin}
The probability law of the SDE \eqref{prior}, when conditioned on $x(0)=x_0,\,x(T)=x_T$, for any $x_0,\,x_T$, reduces to the probability law induced by the SDE
    \[
        dx=(A-BB'R(t)^{-1})xdt+BB'R(t)^{-1}\Phi(t,T)x_Tdt+Bdw
    \]
where $R(t)$ satisfies
    \[
        \dot{R}(t)=AR(t)+R(t)A'-BB'
    \]
with $R(T)=0$.
\end{lemma}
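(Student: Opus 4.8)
The plan is to realize the conditioned (bridge) dynamics as a Doob $h$-transform of the prior \eqref{prior}. Since \eqref{prior} is Markov, conditioning simultaneously on $x(0)=x_0$ and $x(T)=x_T$ is equivalent to fixing the initial state at $x_0$ and then conditioning the resulting Markov process on its terminal value $x(T)=x_T$; the latter is a standard space--time harmonic ($h$-) transform. Because the drift is linear and the noise additive, the transition density $p(T,x_T;t,x)$ of \eqref{prior}, viewed as a function of the starting point $x$ at time $t$, is Gaussian with mean $\Phi(T,t)x$ and covariance equal to the reachability gramian $M(T,t)=\int_t^T\Phi(T,\tau)B(\tau)B(\tau)'\Phi(T,\tau)'\,d\tau$. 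Writing $h(t,x):=p(T,x_T;t,x)$, the $h$-transform leaves the diffusion coefficient $B$ unchanged and adds to the drift the term $B(t)B(t)'\nabla_x\log h(t,x)$.

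First I would compute this correction explicitly. From the Gaussian form,
\[
\nabla_x\log h(t,x)=\Phi(T,t)'M(T,t)^{-1}(x_T-\Phi(T,t)x),
\]
so the bridge drift is $A(t)x+B(t)B(t)'\Phi(T,t)'M(T,t)^{-1}(x_T-\Phi(T,t)x)$. Next I would identify $R(t)$ with the controllability gramian $N(T,t)=\int_t^T\Phi(t,\tau)B(\tau)B(\tau)'\Phi(t,\tau)'\,d\tau$: differentiating under the integral sign with $\partial_t\Phi(t,\tau)=A(t)\Phi(t,\tau)$ and $\Phi(t,t)=I$ shows that $N(T,\cdot)$ solves $\dot R=AR+RA'-BB'$ with $R(T)=0$, which is exactly the equation of the statement (this is \eqref{lin1} run from the terminal condition $R(T)=0$, as noted in Section~\ref{sec:III}). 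The algebraic heart of the matter is the identity $R(t)=\Phi(t,T)M(T,t)\Phi(t,T)'$, whence $R(t)^{-1}=\Phi(T,t)'M(T,t)^{-1}\Phi(T,t)$ and, by the composition rule $\Phi(T,t)\Phi(t,T)=I$, $\Phi(T,t)'M(T,t)^{-1}=R(t)^{-1}\Phi(t,T)$. Substituting this into the correction and again using $\Phi(t,T)\Phi(T,t)=I$ collapses it to $B(t)B(t)'R(t)^{-1}\Phi(t,T)x_T-B(t)B(t)'R(t)^{-1}x$, so the total drift becomes $(A(t)-B(t)B(t)'R(t)^{-1})x+B(t)B(t)'R(t)^{-1}\Phi(t,T)x_T$, precisely the asserted SDE.

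The step I expect to demand the most care is the justification of the $h$-transform rather than the surrounding algebra. The correction term is formally routine, but one must confirm that the transform genuinely produces the two-sided conditioned law and that the resulting SDE is well posed: its drift blows up as $t\to T$ because $R(T)=0$, yet this very singularity is what enforces $x(T)=x_T$ almost surely, consistent with the bridge interpretation. I would dispatch this by appealing to the Markov (reciprocal) structure---so that conditioning on the endpoint pair factors through the terminal $h$-transform from the fixed start $x_0$---together with the standard theory of Gauss--Markov bridges in \cite{CG2014} for existence, uniqueness, and endpoint pinning, rather than by analyzing the singular drift near $t=T$ directly.
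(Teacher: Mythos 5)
Your derivation is correct. Note first that the paper itself offers no proof of this lemma: it simply states that the result ``is based on \cite{CG2014}'' and moves on, so there is nothing in the text to compare against line by line; your Doob $h$-transform argument is the standard route and supplies exactly what the paper omits. The computations check out: $h(t,x)=p(t,x;T,x_T)$ is Gaussian in $x_T$ with mean $\Phi(T,t)x$ and covariance $M(T,t)$, giving the drift correction $B(t)B(t)'\Phi(T,t)'M(T,t)^{-1}(x_T-\Phi(T,t)x)$; the controllability gramian $N(T,t)$ indeed solves $\dot R=AR+RA'-BB'$ with $R(T)=0$ (consistent with the paper's own remark in Section~\ref{sec:III} that $Q(t_0)=N(T,t_0)$ satisfies \eqref{lin1} with $Q(T)=0$); and the identity $R(t)=\Phi(t,T)M(T,t)\Phi(t,T)'$, hence $\Phi(T,t)'M(T,t)^{-1}=R(t)^{-1}\Phi(t,T)$, collapses the correction to $BB'R(t)^{-1}\Phi(t,T)x_T-BB'R(t)^{-1}x$, which is precisely the asserted SDE. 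Controllability of $(A,B)$, assumed throughout Section~\ref{sec:III}, guarantees $R(t)=N(T,t)>0$ for $t<T$, so the inverses you take are legitimate. Your reduction of the two-sided conditioning to a terminal $h$-transform from a fixed start is also valid by the Markov property. The only soft spot is the one you flag yourself: the rigorous identification of the $h$-transformed law with the conditioned law, and the well-posedness of the SDE with drift singular at $t=T$, are asserted by appeal to \cite{CG2014} rather than proved; since the paper does exactly the same thing (indeed, it delegates the entire lemma to that reference), this is an acceptable level of rigor here.
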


The stochastic process specified by this conditioning, or the latter SDE, will be referred to as the {\em pinned process associated to} \eqref{prior}.
Thus, in order to establish that the probability laws of~\eqref{eq:bridge} and~\eqref{prior} conditioned on $x(0)=x_0,\,x(T)=x_T$ are identical, it suffices to show that they have the same pinned processes for any $x_0,x_T$. This is done next.

\begin{thm}\label{thm:main2}
The probability law induced by \eqref{eq:bridge} represents the minimum of the relative entropy with respect to the law of \eqref{prior} over all probability laws on $\cX$ that have gaussian marginals with zero mean and covariances $\Sigma_0$ and $\Sigma_T$, respectively, at the two end-points of the interval $[0,T]$.
\end{thm}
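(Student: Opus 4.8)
The plan is to build on the relative-entropy decomposition already derived above. For any law $\cPt$ on $\cX$ with the prescribed zero-mean Gaussian marginals $\Sigma_0,\Sigma_T$, disintegration gives
\[
\cS(\cPt,\cP)=\cS(\cPt_{0,T},\cP_{0,T})+\E_{\cPt_{0,T}}\!\left[\cS\big(\cPt(\cdot\mid x_0,x_T),\cP(\cdot\mid x_0,x_T)\big)\right],
\]
where the second term is nonnegative and vanishes exactly when $\cPt$ shares the bridges of the prior. Hence $\cS(\cPt,\cP)\ge\cS(\cPt_{0,T},\cP_{0,T})$, and the proof reduces to two claims about the law $\cP^{*}$ induced by \eqref{eq:bridge}: (i) $\cP^{*}$ has the same pinned process as the prior, so it attains equality in this bound; and (ii) among all end-point couplings with marginals $\Sigma_0,\Sigma_T$, the Gaussian coupling carried by $\cP^{*}$ minimizes $\cS(\cdot\,,\cP_{0,T})$. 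Together these force $\cP^{*}$ to be the global minimizer.

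For (i) I would compare the pinned processes furnished by Lemma \ref{lemma:pin}, but the cleanest route is to compute the Girsanov density relating the two laws directly. Since \eqref{eq:bridge} is \eqref{prior} with the added drift $Bu^{*}$, $u^{*}=-B'\Pi_-x$ with $\Pi_-:=Q_-^{-1}$, and both start from $\rho_0$,
\[
\log\frac{d\cP^{*}}{d\cP}=\int_0^T(u^{*})'dw-\tfrac12\int_0^T\|u^{*}\|^2\,dt .
\]
Using $B\,dw=dx-Ax\,dt$ and It$\bar{\rm o}$'s rule on $x'\Pi_-x$, the path-dependent quadratic integrals collapse to $\tfrac12\int_0^T x'\big(\dot\Pi_-+A'\Pi_-+\Pi_-A-\Pi_-BB'\Pi_-\big)x\,dt$, which is identically zero because $\Pi_-$ solves the Riccati equation \eqref{R1}. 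What survives,
\[
\log\frac{d\cP^{*}}{d\cP}=-\tfrac12\big(x(T)'\Pi_-(T)x(T)-x(0)'\Pi_-(0)x(0)\big)+\tfrac12\!\int_0^T\!\tr\big(\Pi_-BB'\big)dt ,
\]
is a function of $(x(0),x(T))$ plus a constant. The density is therefore constant along each bridge, so conditioning on the end-points yields identical laws for $\cP^{*}$ and $\cP$; the reciprocal classes coincide and the conditional term vanishes. This computation is precisely the completion-of-squares of Section \ref{sec:variational} read in reverse, and it is the crux of the argument.

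For (ii) the minimization collapses to finite dimensions. Because $\cP_{0,T}=g_{S_{0,T}}$ is Gaussian, $\log g_{S_{0,T}}$ is quadratic, so $\int\log g_{S_{0,T}}\,d\cPt_{0,T}$ depends on $\cPt_{0,T}$ only through its second-moment matrix, whose diagonal blocks are pinned to $\Sigma_0,\Sigma_T$. By the maximum-entropy property the optimal $\cPt_{0,T}$ is Gaussian, and one minimizes the convex map $C\mapsto\cS\big(g_{\tilde S},g_{S_{0,T}}\big)$ over the cross-covariance in $\tilde S=\left[\begin{smallmatrix}\Sigma_0&C\\ C'&\Sigma_T\end{smallmatrix}\right]\succeq0$. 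The stationarity condition is that the off-diagonal block of $\tilde S^{-1}$ match that of $S_{0,T}^{-1}$; one then checks, using the closed-form $P_-,Q_-$ of Proposition \ref{prop:schpair} and $\Sigma(t)=(P_-(t)^{-1}+Q_-(t)^{-1})^{-1}$, that the coupling produced by \eqref{eq:bridge} satisfies it.

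I expect (i) to be the main obstacle: everything hinges on the shared-bridge property, equivalently on the Girsanov density depending on the path only through its end-points, and this is exactly where the Riccati structure of $\Pi_-=Q_-^{-1}$ is indispensable. As a consistency check that also streamlines (ii), note that any competitor with finite relative entropy must, by Girsanov, be a drift perturbation $\cP^{u}$ of the prior through the $B$-channel, with $\cS(\cP^{u},\cP)=\tfrac12\,J(u)$; the marginal constraints are then exactly $u\in\mathcal U$, so Theorem \ref{prop:bridge} already identifies $\cP^{*}$ as the minimizer and, together with (i), forces its coupling to be the finite-dimensional optimizer. The only genuinely measure-theoretic point is this absolute-continuity characterization for the possibly degenerate diffusion.
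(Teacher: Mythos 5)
Your proposal is correct and shares the paper's overall skeleton --- disintegrate the relative entropy into the end-point--coupling term plus the averaged conditional (bridge) term, kill the second term by matching bridges, and solve the remaining finite-dimensional Gaussian coupling problem --- but your treatment of the bridge-matching step (i) is genuinely different from the paper's. The paper invokes Lemma \ref{lemma:pin} to write down the pinned SDEs of \eqref{prior} and of \eqref{eq:bridge} explicitly and then verifies, through auxiliary matrix functions $J(t)$ and $K(t)$ that solve linear ODEs with zero terminal data, that the two pinned processes coincide. You instead compute the Girsanov density of the controlled law with respect to the prior and show, via It\^o's rule on $x'\Pi_-x$ and the Riccati equation \eqref{R1}, that $\log(d\mathcal{P}^*/d\mathcal{P})$ is a function of $(x(0),x(T))$ alone plus a constant; constancy of the density along each fiber then gives equality of the conditional laws. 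Your sign bookkeeping checks out (the surviving integrand is $\frac12 x'(\dot\Pi_-+A'\Pi_-+\Pi_-A-\Pi_-BB'\Pi_-)x$, which vanishes by \eqref{R1}), and this route is shorter and makes transparent \emph{why} the Riccati structure forces membership in the reciprocal class; what it costs is a justification of the Girsanov density for the degenerate diffusion (drift perturbation in the range of $B$, martingale property of the exponential), which the paper's coefficient-matching argument avoids entirely. For step (ii) your stationarity condition --- matching the off-diagonal block of the inverse joint covariance --- is exactly \eqref{eq:firstorder} in disguise, and the verification that $Y=\Phi_{Q_-}(T,0)\Sigma_0$ satisfies it is the computational core ($F_1=F_2=F_3$) that you correctly identify but do not carry out. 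Your closing observation, that finite-entropy competitors are precisely adapted drift perturbations with $\mathcal{S}(\mathcal{P}^u,\mathcal{P})=\frac12 J(u)$ so that Theorem \ref{prop:bridge} together with (i) already forces the end-point coupling to be optimal, would let you bypass that computation altogether; it is a legitimate alternative, at the price of the nontrivial measure-theoretic representation result you yourself flag, which the paper deliberately does not rely on.
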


\begin{proof}
We show that
i) the joint distribution
between the two end-points of $[0,T]$ for \eqref{eq:bridge} is the minimum of the relative entropy with respect to the corresponding two-end-point joint of \eqref{prior}, over distributions that satisfy the end-point constraint that the marginals are gaussian with specified covariances and, ii) the probability laws of these two SDEs on sample paths, conditioned on $x(0)=x_0,\,x(T)=x_T$ for any $x_0,\,x_T$ are identical by showing that they have the same pinned processes. We use the notation
\[
g_{S}(x):=(2\pi)^{-n/2}\det (S)^{-1/2}\exp\left[-\frac{1}{2}x'S^{-1}x\right],
\]
to denote
the standard Gaussian probability density function with mean zero and covariance $S$.

We start with i). In general, the relative entropy between two gaussian distributions $g_S(x)$ and $g_\Sigma(x)$ is
\begin{align}\nonumber
\int_{\mR^n} g_\Sigma(x)\log\left(\frac{g_\Sigma}{g_S}\right)dx=&\int_{\mR^n} g_\Sigma \log\left(\frac{\det(S)^{1/2}}{\det(\Sigma)^{1/2}}\right)dx\\\nonumber
&+\frac12 \int_{\mR^n} g_\Sigma(x)(x'S^{-1}x-x\Sigma x)dx\\\nonumber
=&\frac12\log(\det(S))-\frac12\log(\det(\Sigma))\\
&+\frac12\tr(S^{-1}\Sigma)-\frac12\tr(I).
\label{eq:Sbetweengaussians}
\end{align}
If $p_\Sigma$ is a probability density function, not necessarily gaussian, having covariance $\Sigma$, then
\begin{align}\nonumber
\int_{\mR^n} p_\Sigma(x)\log\left(\frac{p_\Sigma}{g_S}\right)dx=&\int_{\mR^n} p_\Sigma(x)\log\left(\frac{p_\Sigma}{g_S}\frac {g_\Sigma} {g_\Sigma}\right)dx\\
&\hspace*{-2.2cm}=\int_{\mR^n} p_\Sigma(x)\log\left(\frac{p_\Sigma }{g_\Sigma}\right)dx +
\int_{\mR^n} p_\Sigma(x)\log\left(\frac{g_\Sigma}{g_S}\right)dx\label{eq:firstterm}
\end{align}
where we multiplied and divided by $g_\Sigma$ and then partitioned accordingly.
We observe that
\[
\int_{\mR^n} p_\Sigma(x)\log\left(\frac{g_\Sigma}{g_S}\right)dx=
\int_{\mR^n} g_\Sigma(x)\log\left(\frac{g_\Sigma}{g_S}\right)dx.
\]
since $\log\left(\frac{g_\Sigma}{g_S}\right)$ is a quadratic form in $x$. Thus, the minimizer of relative entropy to
$g_S$ among probability density functions with covariance $\Sigma$ is gaussian since the first term in \eqref{eq:firstterm} is positive unless $p_\Sigma=g_\Sigma$, in which case it is zero.

We consider two-point joint gaussian distributions with covariances
$S_{0,T}$ as in \eqref{eq:S} with $S_0=\Sigma_0$, and
\[
\Sigma_{0,T}:=\left[\begin{matrix}\Sigma_0& Y'\\Y& \Sigma_T\end{matrix}\right]
\]
and evaluate $Y$ that minimizes the relative entropy. To this end we focus on
\begin{equation}\label{eq:entropy}
\tr(S_{0,T}^{-1}\Sigma_{0,T})-\log\det(\Sigma_{0,T}).
\end{equation}
Since
\[
S_{0,T}=\left[\begin{matrix}I\\\Phi(T,0)\end{matrix}\right]\Sigma_0
\left[\begin{matrix}I,&\Phi(T,0)'\end{matrix}\right]+
\left[\begin{matrix}0&0\\0& M(T,0)\end{matrix}\right],
\]
it follows that
\[
S_{0,T}^{-1}=
\left[\begin{matrix}\Sigma_0^{-1}+
\Phi' M^{-1} \Phi &-\Phi' M^{-1}\\-M^{-1}\Phi& M^{-1}\end{matrix}\right],
\]
where we simplified notation by setting
$\Phi:=\Phi(T,0)$ and $M:=M(T,0)$.
Then, the expression in \eqref{eq:entropy} becomes
\begin{align*}
&\tr\left((\Sigma_0^{-1}+\Phi'M^{-1}\Phi) \Sigma_0-\Phi'M^{-1}Y-Y'M^{-1}\Phi+M^{-1}\Sigma_T\right)\\
& -
\log\det(\Sigma_0)-\log\det(\Sigma_T-Y\Sigma_0^{-1}Y').
\end{align*}
Retaining only the terms that involve $Y$ leads us to seek a maximizing choice for $Y$ in
\[
f(Y):=\log\det(\Sigma_T-Y\Sigma_0^{-1}Y')+2\tr(\Phi'M^{-1}Y).
\]
Equating the differential of this last expression as a function of $Y$ to zero gives
    \begin{equation}\label{eq:firstorder}
        -2\Sigma_0^{-1}Y'(\Sigma_T-Y\Sigma_0^{-1}Y')^{-1}+2\Phi'M^{-1}=0
    \end{equation}
To see this, denote by $\Delta$ a small perturbation of $Y$ and retain the linear terms in $\Delta$ in
    \begin{eqnarray*}
&&f(Y+\Delta)-f(Y)\\
        &=& \log\det(I-(\Sigma_T-Y\Sigma_0^{-1}Y')^{-1}(\Delta\Sigma_0^{-1}Y'+Y\Sigma_0^{-1}\Delta'))\\&&+2\tr(\Phi'M^{-1}\Delta)\\
        &\simeq& -\tr((\Sigma_T-Y\Sigma_0^{-1}Y')^{-1}(\Delta\Sigma_0^{-1}Y'+Y\Sigma_0^{-1}\Delta'))\\&&+2\tr(\Phi'M^{-1}\Delta)\\
        &=& -2\tr(\Sigma_0^{-1}Y'(\Sigma_T-Y\Sigma_0^{-1}Y')^{-1}\Delta)\\&&+2\tr(\Phi'M^{-1}\Delta)
    \end{eqnarray*}

Let now
\[
        \Sigma_{0,T}=
        \left[\begin{matrix}
        \Sigma_0 & \Sigma_0\Phi_{Q_-}(T,0)'\\
        \Phi_{Q_-}(T,0)\Sigma_0 & \Sigma_T
        \end{matrix}\right]
    \]
where $\Phi_{Q_-}(T,0)$ is the state-transition matrix
of $A_{Q_-}(t)$, i.e., it satisfies
\begin{eqnarray*}
\frac{\partial}{\partial t}\Phi_{Q_-}(t,s)&=&A_{Q_-}(t)\Phi_{Q_-}(t,s), \mbox{ and}\\
-\frac{\partial}{\partial s}\Phi_{Q_-}(t,s)&=&\Phi_{Q_-}(t,s)A_{Q_-}(s),
\end{eqnarray*}
with $\Phi_{Q_-}(s,s)=I$. We need to show that $\Sigma_{0,T}$ here is the solution of the relative entropy minimization problem above. By concavity of $f(Y)$, it suffices to show that $Y=\Phi_{Q_-}(T,0)\Sigma_0$ satisfies the first-order condition \eqref{eq:firstorder}, that is,
    \begin{eqnarray*}
        &&\Phi_{Q_-}(T,0)'(\Sigma_T-\Phi_{Q_-}(T,0)\Sigma_0 \Phi_{Q_-}(T,0)')^{-1}\\
        &=& \Phi(T,0)'M(T,0)^{-1}\\
        &=& \Phi(T,0)'(S_T-\Phi(T,0)S_0\Phi(T,0)')^{-1},
    \end{eqnarray*}
where $S_t$ is as in \eqref{eq:S} with $S_0=\Sigma_0$. By taking inverse of both sides we obtain an equivalent formula
\begin{equation}\label{eq:cond2}
        \Sigma_T\Phi_{Q_-}(0,T)'-\Phi_{Q_-}(T,0)\Sigma_0=
        S_T\Phi(0,T)'-\Phi(T,0)\Sigma_0.
    \end{equation}
We claim
    \begin{eqnarray*}
        \Sigma_t\Phi_{Q_-}(0,t)'-\Phi_{Q_-}(t,0)\Sigma_0
        &=&
        S_t\Phi(0,t)'-\Phi(t,0)\Sigma_0,
    \end{eqnarray*}
then \eqref{eq:cond2} follows by taking $t=T$. We now prove our claim. For convenience, denote
    \begin{eqnarray*}
        F_1(t) &=& \Sigma_t\Phi_{Q_-}(0,t)'-\Phi_{Q_-}(t,0)\Sigma_0\\
        F_2(t) &=& S_t\Phi(0,t)'-\Phi(t,0)\Sigma_0\\
               F_3(t) &=& Q_-(t)(\Phi_{Q_-}(0,t)'-\Phi(0,t)').
    \end{eqnarray*}
 We will show that $F_1(t)=F_2(t)=F_3(t)$.
First we show $F_2(t)=F_3(t)$. Since $F_2(0)=F_3(0)=0$, we only need to show that they satisfy the same differential equation. To this end, compare
    \begin{eqnarray*}
        \dot{F}_2(t) &=& \dot{S}_t\Phi(0,t)'-S_t A'\Phi(0,t)'-A\Phi(t,0)\Sigma_0\\
        &=&(AS_t+S_t A'+BB')\Phi(0,t)'-S_t A'\Phi(0,t)'\\
        &&-A\Phi(t,0)\Sigma_0\\
        &=&
        AF_2(t)+BB'\Phi(0,t)',
    \end{eqnarray*}
with
    \begin{eqnarray*}
        \dot{F}_3(t) &=&
        {\dot{Q}_-}(t)(\Phi_{Q_-}(0,t)'-\Phi(0,t)')\\
        &&+Q_-(t)(-A_{Q_-}(t)'\Phi_{Q_-}(0,t)'+A'\Phi(0,t)')\\
        &=& (AQ_-(t)+Q_-(t)A'-BB')(\Phi_{Q_-}(0,t)'-\Phi(0,t)')\\
        &&-Q_-(t)A'(
        \Phi_{Q_-}(0,t)'-\Phi(0,t)')+BB'\Phi_{Q_-}(0,t)'\\
        &=& AF_3(t)+BB'\Phi(0,t)'
    \end{eqnarray*}
    which proves the claim $F_2(t)=F_3(t)$.
We next show that $F_1(t)=F_3(t)$. Let
    \begin{eqnarray*}
        H(t)&=&Q_-(t)^{-1}(F_3(t)-F_1(t))\\
        &=& -(Q_-(t)^{-1}-\Sigma_t^{-1})\Sigma_t\Phi_{Q_-}(0,t)'\\
        &&+Q_-(t)^{-1}\Phi_{Q_-}(t,0)\Sigma_0-\Phi(0,t)'\\
        &=& P(t)^{-1}\Sigma_t\Phi_{Q_-}(0,t)'\\
        &&+Q_-(t)^{-1}\Phi_{Q_-}(t,0)\Sigma_0-\Phi(0,t)',
    \end{eqnarray*}
then
\begin{eqnarray*}
        \dot{H}(t)&=&\dot{P}(t)^{-1}\Sigma_t\Phi_{Q_-}(0,t)'+P(t)^{-1}\dot{\Sigma}_t\Phi_{Q_-}(0,t)'
        -\\
        &&P(t)^{-1}\Sigma_t{A_{Q_-}(t)}'\Phi_{Q_-}(0,t)'
        +\dot{Q}_-(t)^{-1}\Phi_{Q_-}(t,0)\Sigma_0\\&&+
        Q_-(t)^{-1}A_{Q_-}(t)\Phi_{Q_-}(t,0)\Sigma_0+A'\Phi(0,t)'\\
        &=& -A'H(t).
    \end{eqnarray*}
Since $H(0)=Q_-(0)^{-1}(F_3(0)-F_1(0))=0$, it follows that $H(t)=0$ for all $t$, and therefore, $F_1(t)=F_3(t)$. This completes the proof of the first part.

We now prove ii). According to Lemma \ref{lemma:pin},
the pinned process corresponding to~\eqref{prior} satisfies
    \begin{equation}\label{eq:bridge11}
        dx=(A-BB'R_1(t)^{-1})xdt+BB'R_1(t)^{-1}\Phi(t,T)x_Tdt+Bdw
    \end{equation}
where $R_1(t)$ satisfies
    \[
        \dot{R_1}(t)=AR_1(t)+R_1(t)A'-BB'
    \]
with $R_1(T)=0$, while the pinned process corresponding to~\eqref{eq:bridge} satisfies
    \begin{eqnarray}\nonumber
        dx&=&(A_{Q_-}(t)-BB'R_2(t)^{-1})xdt+\\\label{eq:bridge22}
        &&BB'R_2(t)^{-1}\Phi_{Q_-}(t,T)x_Tdt+Bdw
    \end{eqnarray}
where $R_2(t)$ satisfies
    \[
        \dot{R_2}(t)=A_{Q_-}(t)R_2(t)+R_2(t)A_{Q_-}(t)'-BB'
    \]
with $R_2(T)=0$. We next show~\eqref{eq:bridge11} and~\eqref{eq:bridge22} are identical. It suffices to prove that
    \begin{equation}\label{eq:bridgecond11}
        A-BB'R_1(t)^{-1}=A_{Q_-}(t)-BB'R_2(t)^{-1}
    \end{equation}
and
    \begin{equation}\label{eq:bridgecond22}
        R_1(t)^{-1}\Phi(t,T)=R_2(t)^{-1}\Phi_{Q_-}(t,T).
    \end{equation}
Equation~\eqref{eq:bridgecond11} is equivalent to
    \[
        R_1(t)^{-1}=R_2(t)^{-1}+Q_-(t)^{-1}.
    \]
Multiply $R_1(t)$ and $R_2(t)$ on both sides to obtain
    \[
        R_2(t)=R_1(t)+R_1(t)Q_-(t)^{-1}R_2(t).
    \]
Now let
    \[
        J(t)=R_1(t)+R_1(t)Q_-(t)^{-1}R_2(t)-R_2(t).
    \]
Then
    \begin{eqnarray*}
        \dot{J}(t)&=&
        \dot{R_1}(t)+\dot{R_1}(t)Q_-(t)^{-1}R_2(t)+R_1(t)\dot{Q}_-(t)^{-1}R_2(t)\\
        &&+
        R_1(t)Q_-(t)^{-1}\dot{R_2}(t)-\dot{R_2}(t)\\
        &=& AJ+JA_{Q_-}(t)'.
    \end{eqnarray*}
Since
    \[
        J(T)=R_1(T)+R_1(T)Q_-(T)^{-1}R_2(T)-R_2(T)=0,
    \]
it follows that $J(t)=0$. This completes the proof of~\eqref{eq:bridgecond11}. Equation \eqref{eq:bridgecond22} is equivalent to
    \[
        \Phi(T,t)R_1(t)=\Phi_{Q_-}(T,t)R_2(t).
    \]
Let
    \[
        K(t)=\Phi(T,t)R_1(t)-\Phi_{Q_-}(T,t)R_2(t),
    \]
and then
    \begin{eqnarray*}
        \dot{K}(t)&=&
            -\Phi(T,t)AR_1(t)+\Phi(T,t)\dot{R}_1(t)+\\
            &&\Phi_{Q_-}(T,t)A_{Q_-}(t)R_2(t)-\Phi_{Q_-}(T,t)\dot{R}_2(t)\\
            &=& K(t)(A'-R_1(t)^{-1}BB').
    \end{eqnarray*}
Since
    \[
        K(T)=\Phi(T,T)R_1(T)-\Phi_{Q_-}(T,T)R_2(T)=0,
    \]
it follows that $K(t)=0$ as well for all $t$. This completes the proof.
\end{proof}

\section{Illustrative examples}\label{sec:examples}

We present two examples that illustrate the effect of optimal probability density steering. The first is based on inertial particles experiencing random accelerations and the second on an electrical circuit experiencing Nyquist-Johnson thermal noise from a resistor.

\subsection{Inertial particles}
Consider inertial particles experiencing random acceleration according to the model
    \begin{eqnarray*}
        dx(t)&=&v(t)dt\\
        dv(t)&=&u(t)dt+dw(t)
    \end{eqnarray*}
where $u(t)$ is a control force at our disposal, $x(t)$ represents position and $v(t)$ represents velocity. We want to squeeze the spread of the particles from an initial Gaussian distribution with $\Sigma_0=I$ at $t=0$ to a terminal marginal $\Sigma=\frac{1}{4}I$ at $t=1$. Figure \ref{fig:Eg1Phase1} shows sample paths in the phase space of $(x,v)$ as a function of time using the optimal stragegy for feedback control as explained earlier. Figure \ref{fig:Eg1Control1} displays the corresponding control action for each trajectory.

We provide two additional situations where the final distribution is localized in space and in velocity, respectively. The limit may be thought to approximate singular marginals, in each case, and it is of interest to compare the two since in one case the stochastic excitation affects directly the component of interest (velocity) whereas in the other after integration. Thus, we again take $\Sigma_0=I$ while we take $\Sigma_1$ to equal to ${\rm diag}(.05,\;1)$ and ${\rm diag}(1,\;.05)$, respectively, for the two cases. Sample paths in phase space under the optimal control law are shown in Figures \ref{fig:Eg1Phase2} and \ref{fig:Eg1Phase3}, respectively. In all of these phase plots \ref{fig:Eg1Phase1},\ref{fig:Eg1Phase2} and \ref{fig:Eg1Phase3}, the transparent blue ``tube'' represents the ``$3\,\sigma$'' tolerance interval. More specifically, the intersection ellipsoid between the tube and the slice plane $t$ is the set
    \[
        \left[\begin{matrix}x &v\end{matrix}\right]\Sigma_t^{-1}\left[\begin{matrix}x\\v\end{matrix}\right]\le 3^2.
    \]
\begin{figure}\begin{center}
    \includegraphics[width=0.47\textwidth]{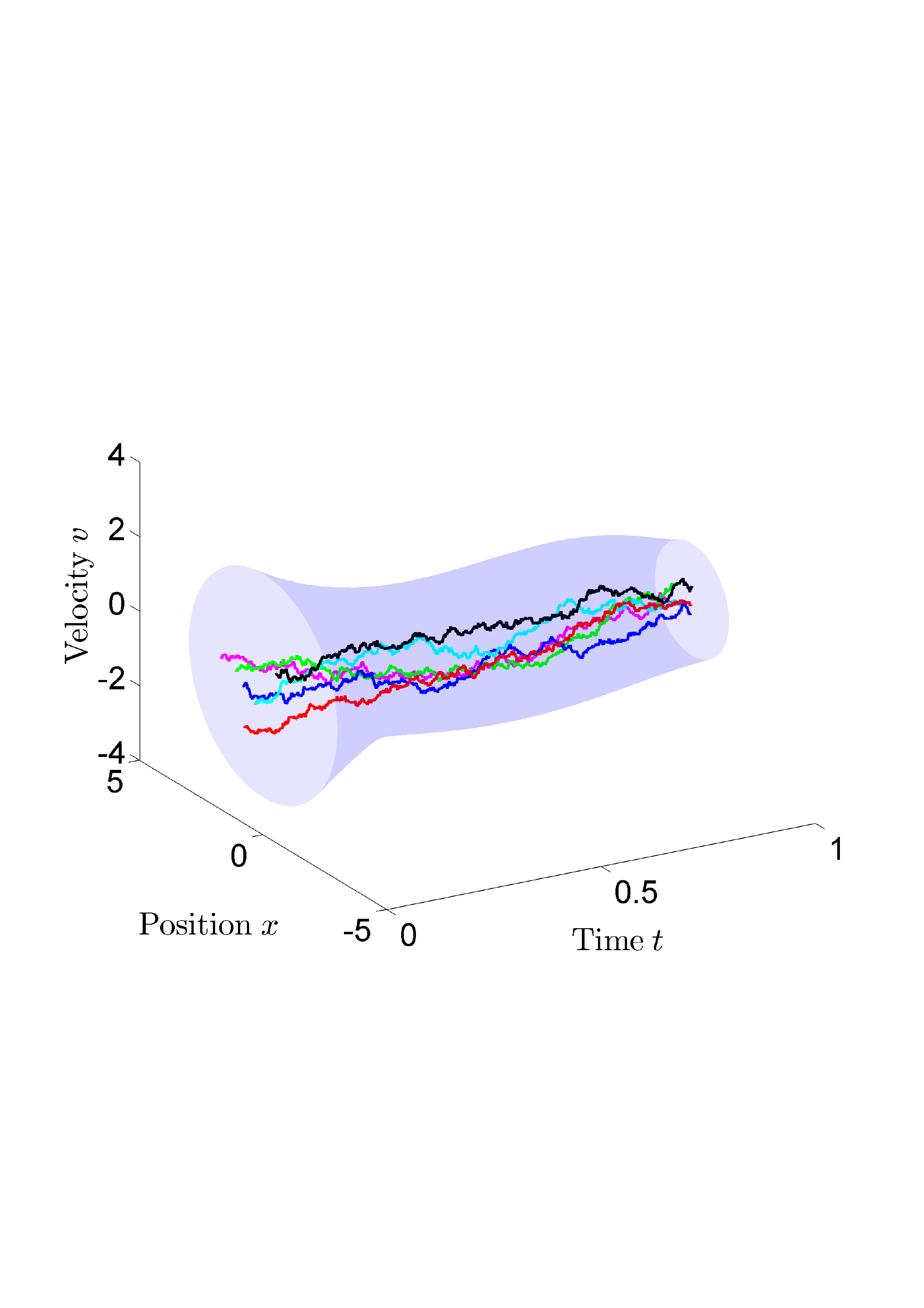}
    \caption{Inertial particles: state trajectories for $\Sigma_1=\frac{1}{4}I$}
    \label{fig:Eg1Phase1}
\end{center}\end{figure}
\begin{figure}\begin{center}
    \includegraphics[width=0.47\textwidth]{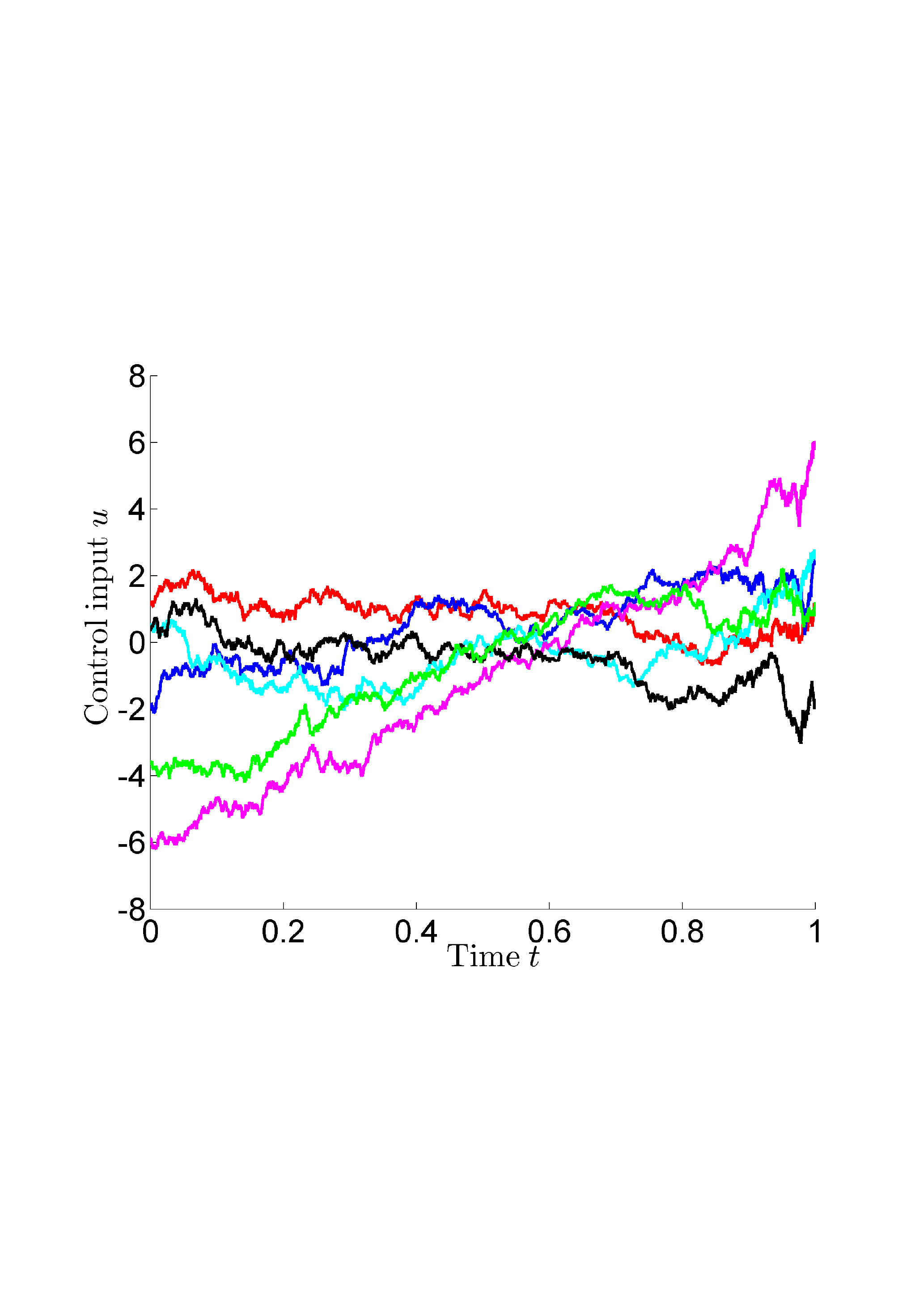}
    \caption{Inertial particles: control inputs for $\Sigma_1=\frac{1}{4}I$}
    \label{fig:Eg1Control1}
\end{center}\end{figure}
\begin{figure}\begin{center}
    \includegraphics[width=0.47\textwidth]{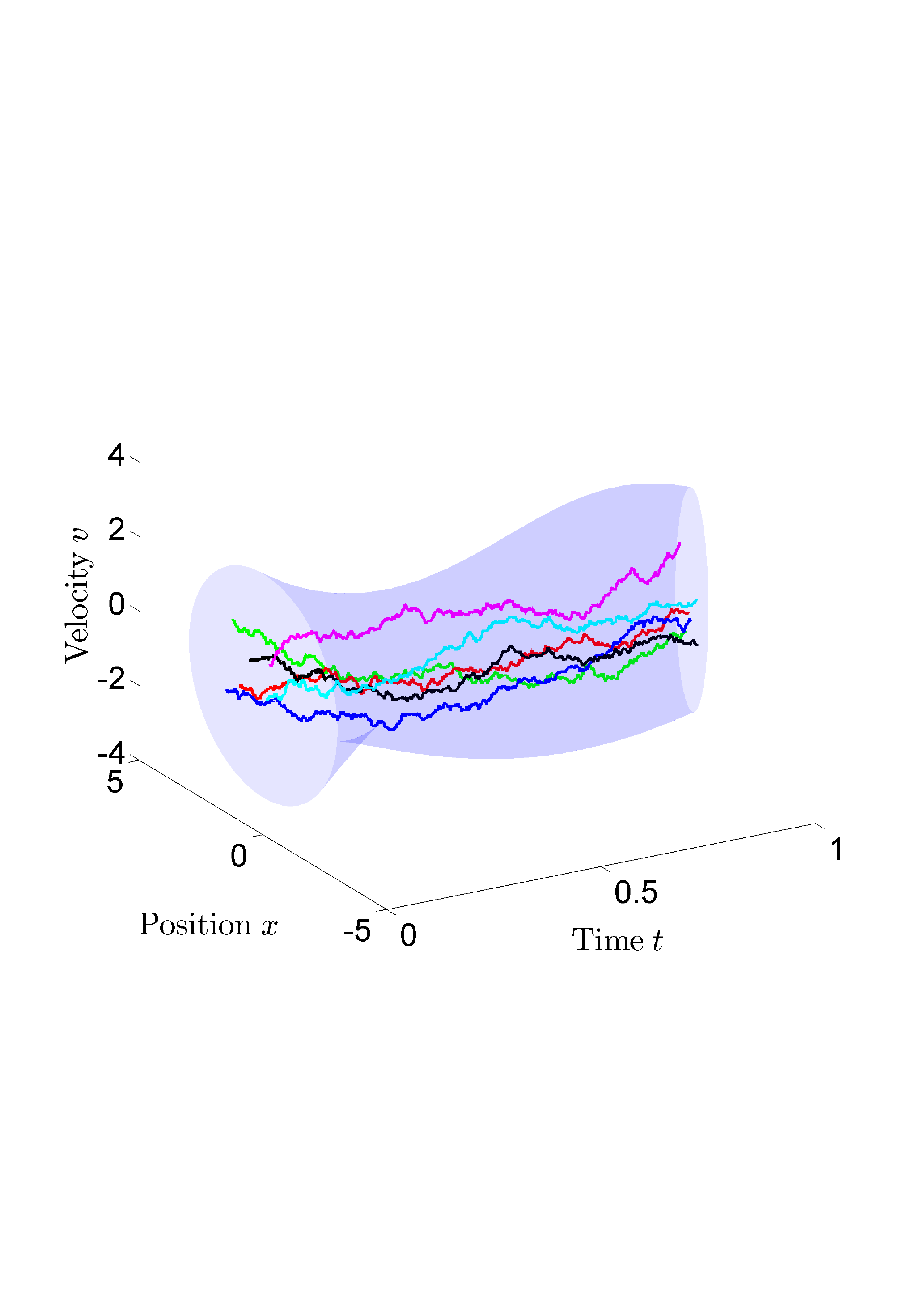}
    \caption{Inertial particles: state trajectories for $\Sigma_1={\rm diag}(.05,\;1)$}
    \label{fig:Eg1Phase2}
\end{center}\end{figure}
\begin{figure}\begin{center}
    \includegraphics[width=0.47\textwidth]{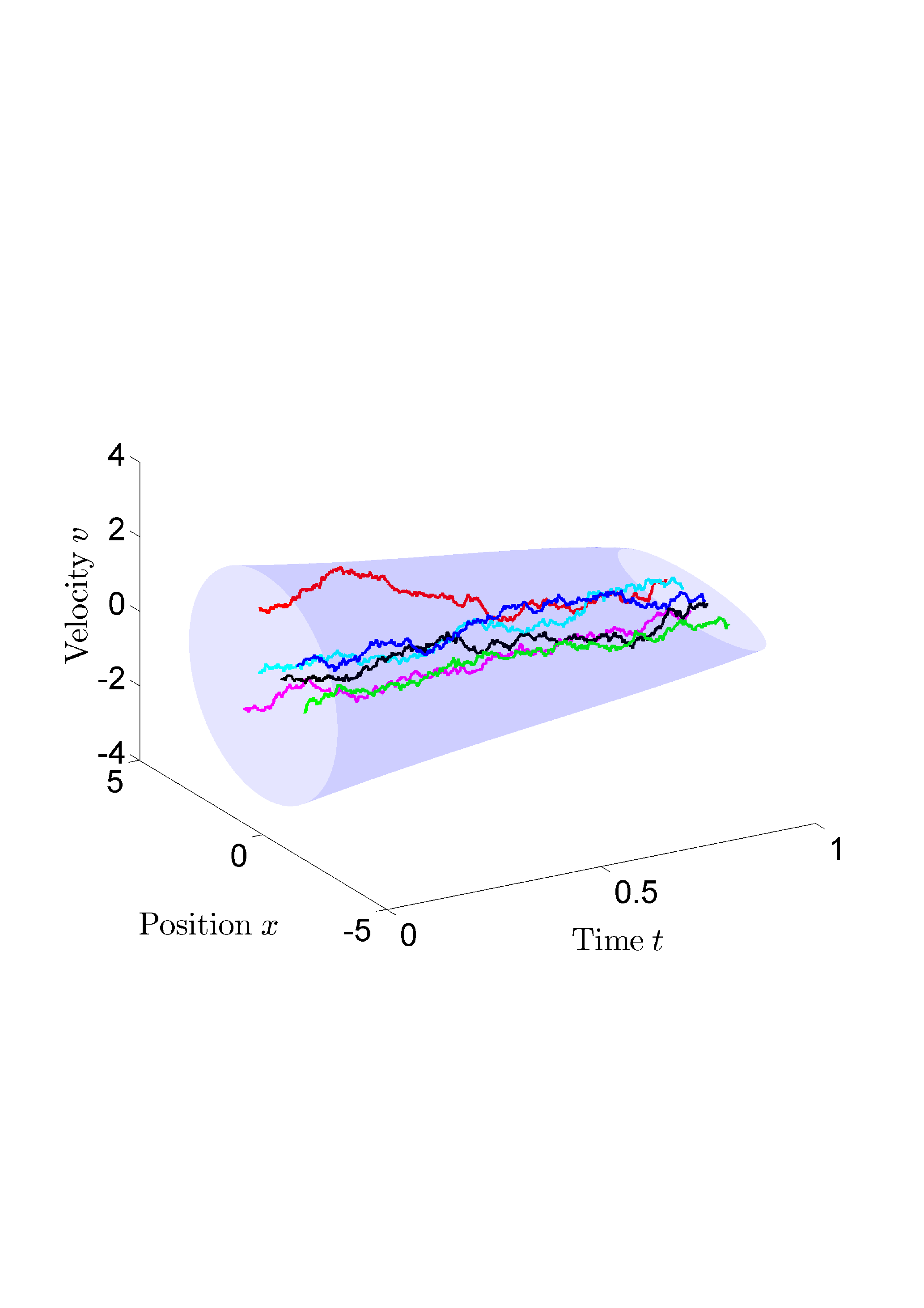}
    \caption{Inertial particles: state trajectories for $\Sigma_1={\rm diag}(1,\;.05)$}
    \label{fig:Eg1Phase3}
\end{center}\end{figure}

\subsection{Nyquist-Johnson resistor noise}

Consider the circuit in Figure \ref{fig:circuit} that includes  a resistor with a Nyquist-Johnson thermal noise voltage source. A model for the circuit is
    \begin{eqnarray*}
        Ldi_L(t)&=&\phantom{-}v_C(t)dt\\
        RCdv_C(t)&=&-v_C(t)dt-Ri_L(t)dt+u(t)dt+dw(t)
    \end{eqnarray*}
with all parameters $R=L=C=1$ in compatible units. Without any active control, i.e., when $u(t)\equiv 0$, the RLC circuit is driven by the thermal noise and reaches a steady state where the covariance matrix of the state vector $(i_L,\,v_C)'$ is $\frac{1}{2}I$.
Thus, we begin with random initial conditions for the state having an initial Gaussian distribution with $\Sigma_0=\frac{1}{2}I$ at $t=0$. Our aim is to specify the control voltage input $u(t)$ so as to reduce the effect of the thermal noise on the oscillator. As before, our target covariance at the end of a pre-specified interval $[0,\,1]$ is set to
to a terminal value; here this is $\Sigma_1=\frac{1}{16}I$. Figure \ref{fig:Eg3Phase1} shows the evolution of $(i_L,v_C)$ as a function of time under the effect of the least energy regulating input voltage $u(t)$ that aims to actively ``cool'' the resonator to its target final distribution. As before, Figure \ref{fig:Eg3Control1} displays the corresponding control inputs.
Once again, in \ref{fig:Eg3Phase1}, the transparent blue ``tube'' represents the ``$3\,\sigma$'' tolerance interval.

\begin{figure}\begin{center}
 \includegraphics[width=.42\textwidth]{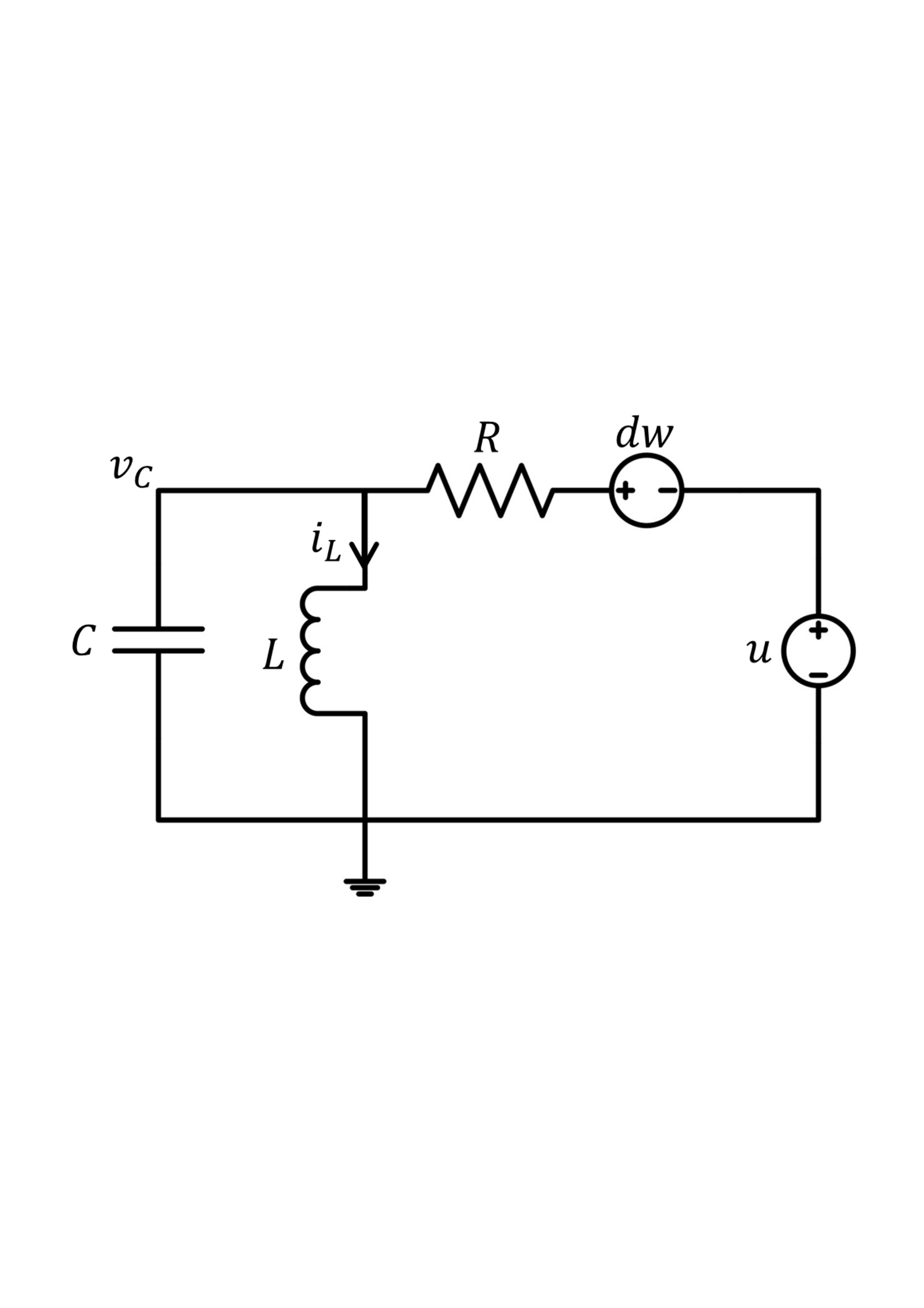}
    \caption{Circuit}
    \label{fig:circuit}
\end{center}\end{figure}

\begin{figure}\begin{center}
    \includegraphics[width=0.47\textwidth]{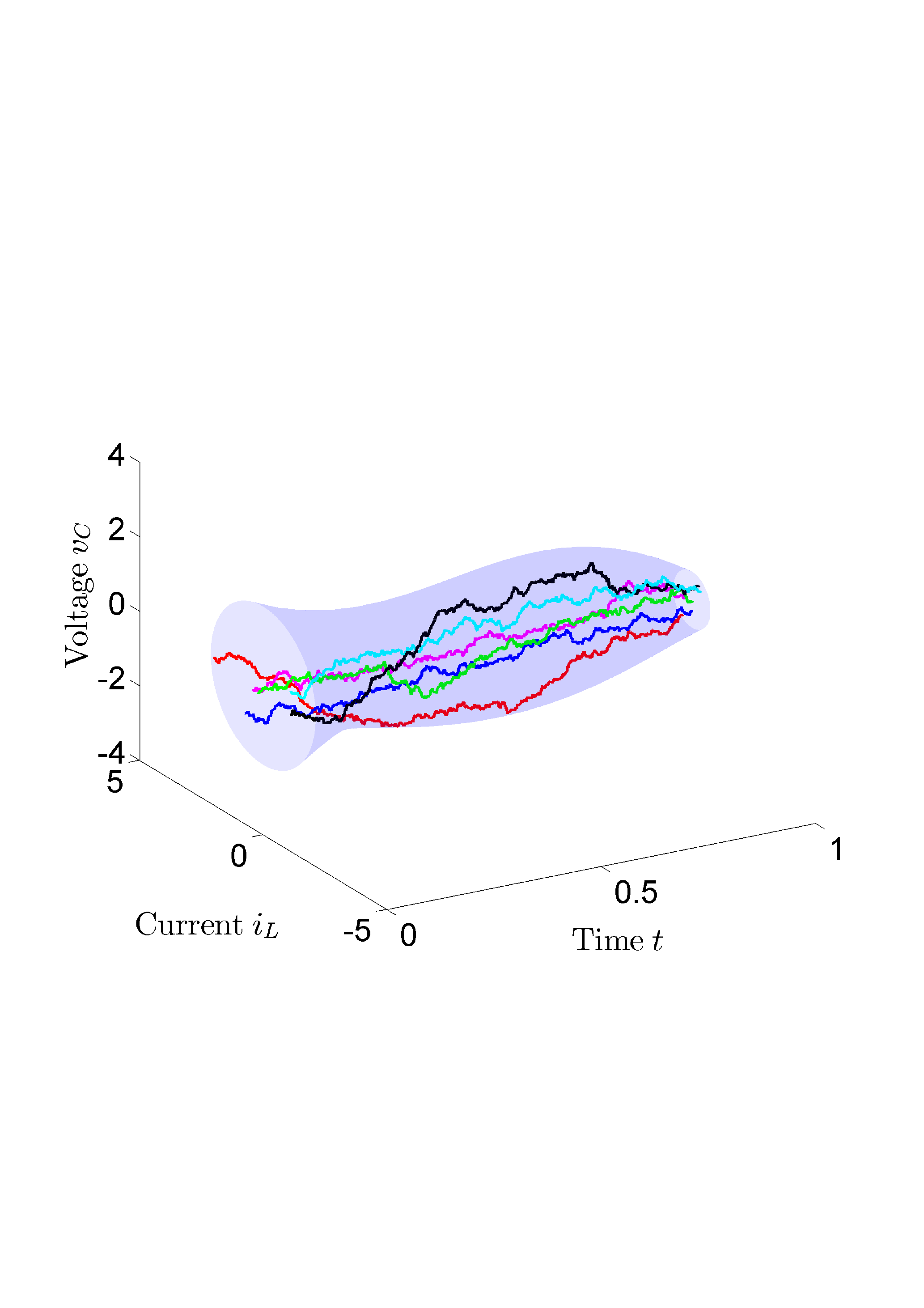}
    \caption{Nyquist-Johnson noise: trajectories for $\Sigma_1=\frac{1}{16}I$}
    \label{fig:Eg3Phase1}
\end{center}\end{figure}
\begin{figure}\begin{center}
    \includegraphics[width=0.47\textwidth]{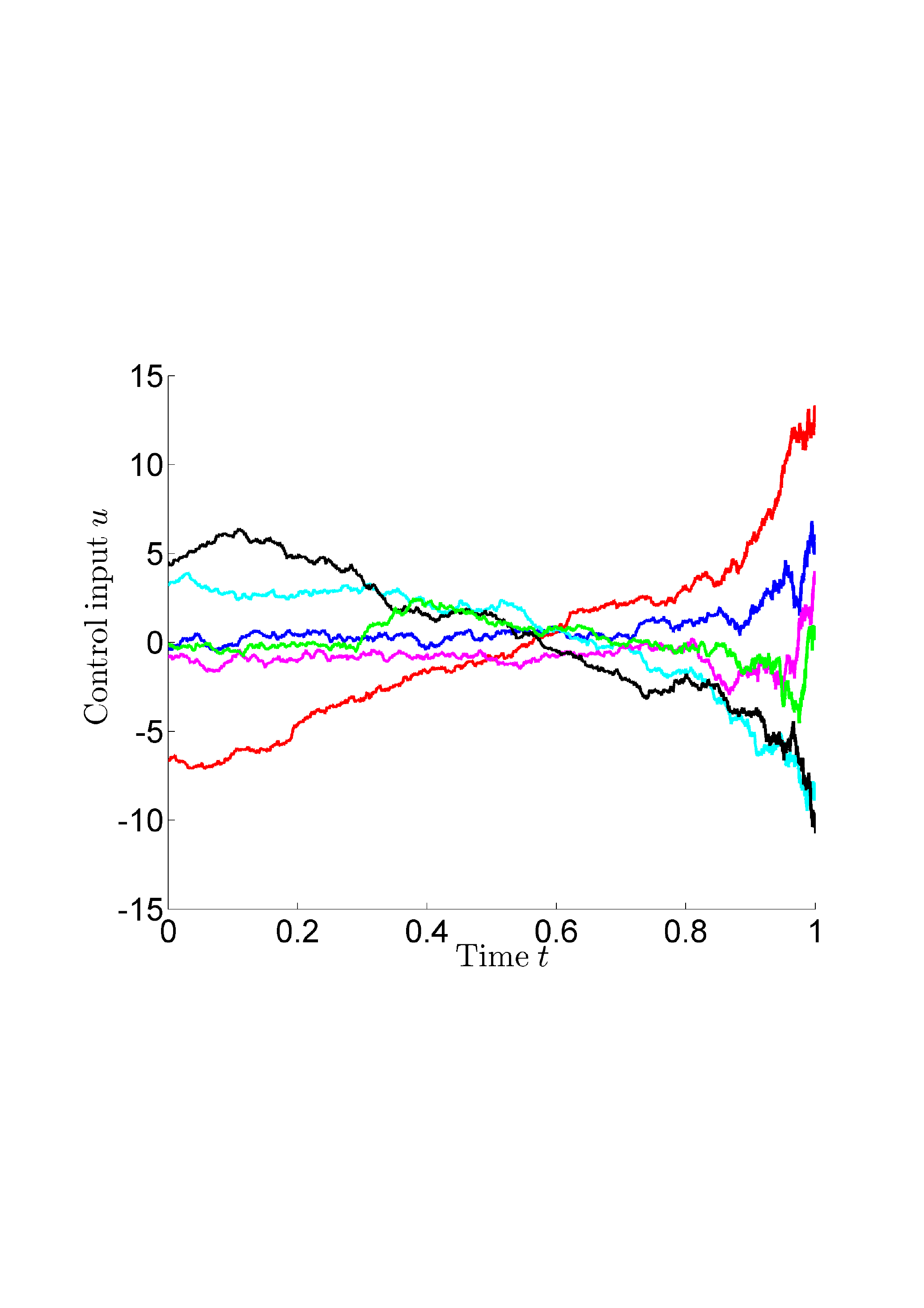}
    \caption{Nyquist-Johnson noise: controls for $\Sigma_1=\frac{1}{16}I$}
    \label{fig:Eg3Control1}
\end{center}\end{figure}

\section{Concluding remarks}
The problem to steer linear stochastic systems from a starting probability gaussian density to a target one with minimum effort has an explicit solution in feedback form. The minimum-energy control is computed by solving a pair of Lyapunov equations which are coupled through their boundary values at the two end-points of the interval. The stochastic process that is realized with the optimal control in place turns out to coincide with a solution to a seemingly different problem, that of seeking the most likely random evolution that connects the two marginals given a prior law in the form of the uncontrolled diffusion.
Both of these properties, the minimum energy and minimum relative entropy distance to the prior, generalize corresponding properties of classical Schr\'odinger bridges for nondegenerate diffusions.

The control of final distributions for stochastic systems and, in particular, the explicit form of solution in the present setting appears quite attractive for applications of active damping of nanomechanical systems and the ``cooling'' of stochastic thermal fluctuations.

\spacingset{1}

\bibliographystyle{IEEEtran}
\bibliography{refs}
\end{document}